\newtheorem{theorem}{Theorem}
\newtheorem{lemma}[theorem]{Lemma}
\newtheorem{proposition}[theorem]{Proposition}
\newtheorem{example}{Example}
\newtheorem{definition}{Definition}
\newtheorem{assumption}{Assumption}
\newcommand{\calR}{\mathcal{R}}
\newcommand{\tp}{\mathrm{top}}
\newcommand{\scr}{\mathrm{sc}}
\newcommand{\GS}{\mathcal{GS}}
\newcommand{\GSP}{\mathcal{GSP}}
\title{\bf Gibbard--Satterthwaite Games\\ for $k$-Approval Voting Rules}
\author{Umberto Grandi}
\affil{University of Toulouse, umberto.grandi@irit.fr}
\author{Daniel Hughes}
\affil{The University of Auckland, dhug729@aucklanduni.ac.nz}
\author{Francesca Rossi}
\affil{University of Padova, frossi@math.unipd.it}
\author{Arkadii Slinko}
\affil{The University of Auckland, a.slinko@auckland.ac.nz}
\date{}
\begin{document}

\maketitle

\begin{abstract}
The Gibbard-Satterthwaite theorem implies the existence of voters, called manipulators, who can change the election outcome in their favour by voting strategically. When a given
preference profile admits several such manipulators, voting becomes a game played by these voters, who have to reason strategically about each others' actions. To complicate the game even further, counter-manipulators may then try to counteract the actions of manipulators. Our voters are boundedly rational and do not think beyond manipulating or countermanipulating. We call these games Gibbard--Satterthwaite Games. In this paper we look for conditions that guarantee the existence of a Nash equilibria in pure strategies. 
\end{abstract}

\section{Introduction}
Voting is a common method of preference aggregation, which enables the participating agents 
to identify the best candidate given the individual agents' rankings of the candidates. 
However, no ``reasonable'' voting rule is immune to manipulation: as shown by \citet{gib:j:gs}
and \citet{sat:j:gs}, if there are at least three candidates, then any onto, non-dictatorial 
voting rule admits a preference profile (a collection of voters' rankings) where some voter would be better
off by submitting a ranking that differs from his truthful one {\color{black} or, in other words, his truthful vote is not the best response to the votes of other voters}. We call such voter a Gibbard-Satterthwaite manipulator or GS-manipulator for short. When such a manipulator is unique, he\footnote{In the paper we refer to candidates as females and voters as males.} then has a disproportional influence on the election outcome.
However, in the presence of multiple manipulators their attempt to manipulate the election simultaneously in an uncoordinated fashion (and we assume that no coordination devices exist) may bring an outcome that differs not just from the outcome under the truthful voting, but also from the outcome
that any of the GS-manipulators could anticipate. This may be due to possible complex interference
among the different manipulative votes, and may deter some of them (especially risk-averse ones) from manipulating. When we also include in consideration those voters who cannot manipulate themselves but can prevent others from manipulating---the so-called countermanipulators---situation becomes even more complex and can be described only in game-theoretic terms.  Let us illustrate this by an example.

\begin{example}
\label{ex1}
In Table~\ref{table:intro} we describe a voting situation. There are four voters and five candidates, and each voter ranks the candidates from the most to the least preferred. Suppose that we use Plurality rule  to compute the winner, which declares the candidate with the highest number of first positions the winner. Ties are resolved following a predetermined ordering over the candidates, in this case $w>a>b>c>d$.

\begin{table}[h]
\begin{center}
\begin{tabular}{ccccc}
voter 1&voter 2&voter 3&voter 4\\
\midrule
$a$&$b$&$w$&$d$\\
$b$&$a$&$c$&$w$\\
$c$&$c$&$a$&$a$\\
$d$&$d$&$d$&$b$\\
$w$&$w$&$b$&$c$\\
\end{tabular}
\caption{A preference profile. The most preferred candidates are on top, followed by the less preferred candidates in a complete ranking.
}\label{table:intro}
\end{center}
\end{table}

In the situation described in Table~\ref{table:intro} the winner is $w$, thanks to the tie-breaking. This result is the worst possible for the first two players, who each have the possibility of manipulating the result in  favour of $b$ and $a$, respectively.  They are the only GS-manipulators at this profile, and let us call their sincere strategies $s_1,s_2$ and insincere $i_1,i_2$, respectively.   Let us zoom in on the situation when: voter 1 and voter 2 are strategic and other voters are not.  Voter 1's insincere strategy consists in voting for $b$ instead of $a$ and make $b$ the winner, voter 2's can vote insincerely in favour of $a$ instead of $b$ and make $a$ the winner. If both manipulate at the same time, their efforts will cancel out. They are playing an anti-coordination game.  We can represent this game with numbers being the positions of a winning candidate in the individual ranking, 0 for the least preferred and 4 for the top one:

\[
\begin{array}{c||c|c|}
 &s_2&i_2\\
 \hline
 \hline
s_1&0,0&4,3\\
\hline
i_1&3,4&0,0\\
\hline
\end{array}
\] 

Let us now put spotlight on all voters.
We observe that voter 3 is happy and does not have reasons for strategising. Voter 4 does not have any incentive to manipulate: the current winner $w$ is in his second position, so giving her more support will not change the outcome. Hence voting for $w$ instead of $d$ is not a manipulation for him. However, this move is a very strong countermanipulation; if voter 4 fears any strategic move from any of the first three players: giving additional support to $w$  makes manipulation impossible, ending any strategic considerations.
\end{example}

We see that, even for such a simple voting rule as Plurality, a single profile can give us a plethora of games depending on which voters are strategic and which are not. A non-strategic voter has only his sincere vote in his strategy set, while a strategic voter has more than one strategy. We are interested in the properties of the normal-form games that arise under $k$-Approval voting rules (and Plurality is 1-Approval). These rules are simple enough to allow for a classification of voting manipulations, but complex enough to admit the realization of non-trivial games. 

As is usually the case, in the initial investigation, like this one, it is customary to assume the full information framework which means that everybody's sincere preferences are publicly known as well as their strategy sets.\footnote{A similar approach is taken, for example, in the investigation of the games that appear in generalised second price auctions by \cite{EOS2007}.} However, the voting intentions of the voters remain private to those voters. 

An important novel feature of the games, considered in this paper, which distinguishes them from voting games that have been considered in prior literature (see Section~\ref{sec:related} for related literature survey), is exactly the introduction of types of players which are characterised by their strategy sets. There are several reasons for the introduction of types. Firstly, reducing players' strategy sets  we can secure their bounded rationality. The second is that the knowledge of the sincere profile does not allow to unambiguously decide who is strategic and who is not.  Voters may be able to manipulate but reject this on moral grounds or they may be unable to calculate their manipulation.  On the other hand, a voter may not be able to manipulate but can take preventive measure from a disastrous (for him) effect of someone else's manipulation (like voter~4 in Example~\ref{ex1}).  Thus the introduction of strategy sets  allows us to bring into a spotlight and to study in isolation various aspects of strategic manipulation, e.g., the interaction of Gibbard-Satterthwaite manipulators (e.g., voters 1,2 and 3 in Example~\ref{ex1}) or the interaction of a manipulator and a countermanipulator etc. 

We can summarize this as follows: given a voting rule, every profile of voters' preferences gives rise to a number of games that can be played;  we will call them {\em Gibbard-Satterthwaite games or GS-games}. These games will differ by the set of strategies available to the players and, in particular, by the division of voters into strategic and non-strategic ones.  We treat the strategy sets as a state of nature whose move makes the structure of the game a public knowledge.

The simplest non-trivial example of our framework involves two players each having two strategies: one sincere and one insincere, we call them $2$-by-$2$ games. They can be both GS-manipulators or, alternatively, one can be a GS-manipulator and another a countermanipulator. It does not mean that the election from which this game arises has two voters only, simply only two voters at the given profile are strategic as determined by nature.

We ask whether any $2$-by-$2$ game can be represented as a GS-game. To answer this question
we need a classification of $2$-by-$2$ games. As the existing classifications turn out to be 
too fine-grained for our purposes, we develop a simple coarser classification, and observe
that the definition of GS-games imposes certain restrictions on players' preferences. Combining
this observation with symmetry arguments, we arrive at $6$ basic types of $2$-by-$2$ games played by two manipulators.
We then show that, while all six games can be obtained as GS-games under the $2$-Approval voting
rule, but for Plurality rule (1-Approval) only four of them are realisable. We also obtain a similar classification of $2$-by-$2$ manipulator and countermanipulator games. 

For GS-games with more than two players we bring under the spotlight the situation when all players are GS-manipulators. 
We study the existence of pure strategy Nash equilibria in such games. We show that every GS-game for Plurality has a Nash equilibrium, and identify necessary and sufficient conditions for the existence of Nash equilibria for $2$-Approval games. It appears that a mild rationality condition which we call Soundness Assumption is sufficient and we construct a $2$-Approval game with no Nash equilibria. We also found sufficient conditions for the existence of Nash equilibria of $3$-Approval games. These sufficient conditions assume that manipulating voters choose manipulating strategies which are in some sense minimal. However, we show that  this Minimality Assumption fail to ensure the existence of Nash equilibria for $4$-Approval games.

The paper is organised as follows. In Section~\ref{sec:related} we discuss related work, and in Sections~\ref{sec:prelim} and \ref{sec:model} we introduce Gibbard-Sattethwaite manipulation games. The main contributions of the paper are presented in Section~\ref{sec:2by2}, in which we classify 2-by-2 manipulation games, and in Section~\ref{sec:nash}, where we study the existence of Nash equilibria in arbitrarily large games for $k$-approval. Sections~\ref{sec:discussion} and~\ref{sec:conclusions} discuss the results presented and conclude the paper.
 
\section{Related Work}\label{sec:related}


There is a substantial body of research dating back to \citet{far:b:voting} 
that explores the consequences of modeling non-truthful voting as a strategic game; 
see, e.g., \cite{mou:j:dominance,fed-sen-wri:j:entry,mye-web:j:voting,deSinopoli,dhi-loc:j:dominance,sertel2004strong,de2015stable}. The most popular framework so far has been the one introduced by \cite{mye-web:j:voting}. This model, in particular, stipulate that each voter has a utility for the election of each candidate (so it is not a purely ordinal in nature). Myerson and Weber suggested the use of Nash equilibria and other solution concepts for the analysis of voting games, however, sometimes this idea led to a large number weird Nash equilibria, some Pareto dominated. In further works many attempts have been made to weed them out. The following methods were considered: equilibria refinements \citep{deSinopoli}, costly voting \citep{sin-ian:j:costly}, generic utilities \citep{de2015stable}. The problem however remains not completely solved.  To the best of our knowledge, in all of these papers the set of players consists of all voters, i.e., a player is allowed to vote non-truthfully even if  he would be unable to manipulate the election on his own or countermanipulate. This leads to a large number weird Nash equilibria, some Pareto dominated. Restricting the set of players to GS-manipulators in the original profile, as we do in this paper,  alters the problem substantially; for instance, it rules out ``bad'' Nash equilibria such as when all players vote for the same undesirable candidate. The problems with this model primarily stem from the fact that voters' are allowed to vote irrationally.

In contrast,  \cite{EGRS} assume that the voters reason about potential actions of other voters assuming that they are boundedly rational and they use an adaptation of the cognitive hierarchy model for this.
They take non-strategic (sincere) voters as those belonging to level 0. The players of level 1 give best responce assuming that all other players belong to level 0, and, in case if this responce is not unique (i.e., when this particular voter is not a Gibbard-Satthethwaite manipulator) it is defined to be the sincere vote. The players of level 2 give their best responce to assuming that all other players belong to level 0 or level 1. We note that players of level 2 are already quite sophisticated. They can, for example, think of countermanipulating or they can strategically stay sincere when they can manipulate. The emphasis there is on the complexity of a level 2 voter deciding  whether his manipulative strategy weakly dominates his sincere strategy. They present a polynomial time algorithm for 2-Approval but prove NP-hardness for 4-Approval voting rule.

The algorithmic aspects of voting games have recently received some attention as well \cite{des-elk:c:eq,xia-con:c:spne,tho-lev-ley:c:empirical,obr-mar-tho:c:truth-biased}.  

Iterative voting is the closest topic to this paper considered in the literature. In this model players change their votes one by one in response to the current outcome
\cite{mei-pol:c:convergence,lev-ros:c:iterative,rei-end:c:polls,rey-wil:c:bestreply}. However, there are significant differences with our framework. The main one is that a voting manipulation game is a one-shot game while in the iterative voting a player can make several moves. We have a fixed strategy set for each player while in the iterative voting players decide on their next move depending on the profile that resulted after the previous moves. So the strategy sets of players change over time. The common feature is that both approaches assume boundedly rational voters.\footnote{From the point of view of the cognitive hierarchy model all their voters belong to level-1 of the hierarchy.}

\cite{bar-coe:j:non-controversial-k-names} introduce a totally different type of voting games  where the players choose by voting a subset of candidates with a fixed size from a given set of candidates and an external actor, Chooser, selects one of the preselected candidates. They call this game the Random Chooser Game and also use Nash equilibria to analyse these games.



\section{Preliminaries}\label{sec:prelim}
We consider elections over a candidate set $C=\{c_1, \dots, c_m\}$ in which $n$ voters $1,2,\ldots,n$ participate.
An election is defined by a {\em preference profile} $V=(v_1, \dots, v_n)$, where each $v_i$, 
$i=1, \dots, n$, is a total order over $C$; we refer to $v_i$ as the {\em vote}, or {\em preferences}, of voter $i$.
For two candidates $c_1, c_2\in C$ we write $c_1\succ_i c_2$ if voter~$i$ ranks $c_1$ above $c_2$;
if this is the case, we say that voter $i$ {\em prefers} $c_1$ to $c_2$.
For brevity we will sometimes write $ab\dots z$ to represent a vote $v_i$
with $a\succ_i b\succ_i\cdots\succ_i z$.
We denote by $\tp(v_i)$ the top candidate in $v_i$. 
Also, we denote by $\tp_k(v_i)$
the set of top $k$ candidates in~$v_i$.

Given a preference profile $V=(v_1, \dots, v_n)$, we denote by $(V_{-i}, v'_i)$
the preference profile obtained from $V$ by replacing $v_i$ with $v'_i$; for readability,
we will sometimes omit the parentheses around $(V_{-i}, v'_i)$ and write $V_{-i}, v'_i$.

Let $X = (x_1, \dots, x_\ell)$ and $Y = (y_1, \dots, y_\ell)$ be two sequences over disjoint sets of candidates such that no candidate is repeated in any of them and $v$ is a vote.
Then $v[X; Y]$ denotes the vote obtained
by swapping $x_j$ with $y_j$ for $j=1, \dots, \ell$ in the individual preference ordering $v$. We often denote sequences as $X=x_1\ldots x_\ell$ and $Y=y_1\ldots y_\ell$. Then we write 
$v[x_1\ldots x_\ell; y_1\ldots y_\ell]$ instead of $v[X; Y]$.



A (resolute) {\em voting rule} is a mapping $\calR$ that, given a profile $V$, outputs a candidate $\calR(V)\in C$ called the {\em winner} at $V$ under $\calR$. We say that two votes $v$ and $v'$ of voter $i$ over the same candidate set $C$
are {\em equivalent} with respect to a voting rule $\calR$,
if $\calR(V_{-i}, v)=\calR(V_{-i}, v')$ for every profile~$V$.

In this paper we  consider $k$-Approval voting rules. Under $k$-Approval, $1\le k\le m-1$, each candidate receives one point from
each voter who ranks her in top $k$ positions. The candidate(s) with the highest score wins. 
Since any $k$-Approval voting rule is not resolute and two or more candidates can share the highest score, we complement it with a tie-breaking. In this paper any ties that occur are broken according to a fixed order $>$, usually alphabetic, over $C$.\footnote{Our results can be adapted to any other natural tie-breaking.} 

  We denote the $k$-Approval score of a candidate $c$ in a profile $V$ by $\scr_k(c, V)$. 
We will sometimes denote the $k$-Approval rule by $k$-App. $1$-Approval is also known as Plurality.   It is easy to see that $v$ and $v'$ are equivalent with respect to $k$-Approval if and only if $\tp_k(v)=\tp_k(v')$.
Occasionally we refer to the Borda rule. Under this rule, each candidate gets $m-j$ points from each voter who
ranks her in position $j$.


\section{The Model}\label{sec:model} 

{\color{black} Our goal is to model and investigate situations that arise in voting when one or more voters are strategic. 
We model such situations as normal form games that we call {\em voting manipulation games}.
We will now define such situations and games formally. We start with defining the first and the main type of a strategic voter.
}


\begin{definition}
\label{GS-manipulator}
We say that a voter $i$ is a {\em Gibbard--Sat\-ter\-thwaite manipulator}, or a {\em GS-manipula\-tor}, 
at a profile $V=(v_1, \dots, v_n)$ with respect to a voting rule $\calR$, 
if there exists a vote $v'_i\neq v_i$ such that $i$ strictly prefers $\calR(V_{-i}, v'_i)$ to $\calR(V)$.  If $\calR(V_{-i}, v'_i)=p$, we will also say that $i$ manipulates {\em in favor of $p$}.
\end{definition}

\begin{definition}
\label{GS-manipulation}
 A vote $v'_i$ is called a {\em GS-manipulation} of voter $i$ if
\begin{itemize} 
\item 
$i$ prefers $\calR(V_{-i}, v'_i)$ to $\calR(V)$, and, 
\item 
for every $v''_i$ it holds that either $\calR(V_{-i}, v'_i) = \calR(V_{-i}, v''_i)$
or $i$ prefers $\calR(V_{-i}, v'_i)$ to $\calR(V_{-i}, v''_i)$.
\end{itemize}
\end{definition}

Sometimes a voter can manipulate in favor
of several different candidates; however, in Definition~\ref{GS-manipulation} 
we require the voter to focus on his most preferred candidate among the ones
he can make the election winner. This is a mild rationality assumption.
%

Below is another important type of strategic voter.

\begin{definition}
\label{GS-countermanipulator}
Suppose voter $i$ is a GS-manipulator at $V=(v_1, \dots, v_n)$ with respect to a voting rule $\calR$ and his manipulation is $v_i'$. We say that a voter $j$ is a {\em countermanipulator} at a profile $V$ against $v_i'$  if
\begin{itemize} 
\item there exists a vote $v'_j\neq v_j$ such that $j$ prefers $\calR((V_{-i}, v'_i)_{-j},v_{j}')$ to $\calR(V_{-i}, v'_i)$, and, 
\item 
for every $v''_j$ it holds that either $\calR((V_{-i}, v'_i)_{-j},v_{j}')=\calR((V_{-i}, v'_i)_{-j},v_{j}'')$
or $j$ prefers \newline $\calR((V_{-i}, v'_i)_{-j},v_{j}')$ to $\calR((V_{-i}, v'_i)_{-j},v_{j}'')$.
\end{itemize}
In this case vote $v'_j$ is called a {\em countermanipulation} of voter $j$ against $v'_i$.
\end{definition}

{\color{black} Of course these are two most basic types of strategic voters. There are more sophisticated ones. For example, a voter may not be able to change the result of the election unilaterally (in fact large elections voters are seldom pivotal) but he may hope that there will be other likeminded voters who will also change their vote in a similar way and the desired change may come about as a result of combined efforts \citet{safe1,safe2}. Alternatively, he may countermanipulate against a coalition of manipulaters, etc. The higher his level of rationality, the more strategic motives the voter understands and the more complex game he faces as a result \cite{EGRS}. 
}

We denote the set of all strategic voters at a profile $V$ with respect to a voting rule $\calR$
by $N(V, \calR)$, {\color{black} after nature makes its move, this set is known to everybody}. 

Recall that a {\em normal-form game} is defined by a set of {\em players} $N$, and, for each player $i\in N$,
 a set of {\em actions} $A_i$ and a preference relation $\succeq_i$ defined on the space of 
{\em action profiles}, i.e., on tuples of the form $(a_1, \dots, a_n)$, where $a_i\in A_i$ for all $i\in N$. 
 Alternatively we can think that there is a function $f\colon A_1\times\ldots\times A_n \to \mathcal{O}$, where $\mathcal{O}$ is the set of outcomes and relations $\succeq_i$ are defined on $\mathcal{O}$.\footnote{While  normal-form games may be defined either in terms of utility functions or in terms of
preference relations, the latter approach is more suitable for our setting, as we only have ordinal
information about the voters' preferences.}
In our case, the preference relation of player $i$ on the action profiles is determined
by the outcome of $\calR$ on those action profiles.

{\color{black} A voting manipulation game at a profile $V$ is any game with the set of players $N=N(V, \calR)$ such that for every player $i$ his strategy set $A_i$, includes this voter's sincere vote.}
Given an action profile $V^* = (v^*_i)_{i\in N}$, let 
$V[V^*]=(v'_1, \dots, v'_n)$ be the preference profile
such that $v'_i=v_i$ for $i\not \in N$ and $v'_i=v^*_i$ for $i\in N$.
Then, given two action profiles $V^*$ and $V^{**}$, 
we write $V^*\succeq_i V^{**}$ if and only if $\calR(V[V^*])=\calR(V[V^{**}])$ or player $i$
prefers $\calR(V[V^*])$ to $\calR(V[V^{**}])$. {\color{black} In what follows, we denote this game $G=(V, \calR, (A_i)_{i\in N(V, \calR)})$.}

As we found, inclusion of countermanipulators as strategic players, beyond $2$-by-$2$ games, makes the analysis 
of the game too complex. Also, a high degree of rationality must be assumed from the players. As is shown in \cite{EGRS} 
a player must be at least in Level 2 of the cognitive hierarchy to become a countermanipulator. So most of the time we assume that 
only GS-manipulators are strategic. Such games we call {\em Gibbard-Satterthwaite games} or {\em GS-games}.\par

For each preference profile $V$ and each voting rule $\calR$, a GS-game is a normal-form game 
defined as follows. For each game in this family, the set of players $N$ is the set of all GS-manipulators 
in $V$ under $\calR$. For each player $i$, his set of actions $A_i$ consists of his truthful vote
and a subset (possibly empty) of his GS-manipulations; different choices of these subsets correspond 
to different games in the family. 
We 
denote the set of all GS-games for $V$ and $\calR$
by $\GS(V, \calR)$. Note that all games in $\GS(V, \calR)$ have the same set of players, namely, $N(V, \calR)$,
so an individual game in $\GS(V, \calR)$ is fully determined by the players' sets of actions, i.e.,
$(A_i)_{i\in N(V, \calR)}$. 
When $V$ and $\calR$ are clear from the context, we simply write $G = (A_i)_{i\in N}$.
We refer to an action profile in a GS-game as a {\em GS-profile};
we will sometimes identify the GS-profile $V^* = (v^*_i)_{i\in N}$ with the preference profile
$V[V^*]$. We denote the set of all GS-profiles in a game $G$ by $\GSP(G)$. 

We emphasise that, rather than considering games where each player's set of actions
consists of his truthful vote and {\em all} of his GS-manipulations, 
we allow the players to limit themselves to subsets of their GS-manipulations.  
There are several reasons for that. First, the space of all GS-manipulations
for a given voter can be very large, and a player may be unable or unwilling to identify all 
such votes; indeed, even counting the number of GS-manipulations for a given
voter is a non-trivial computational problem 
\cite{bac-bet-fal:c:counting}.
Thus, the player may use a specific algorithm (e.g., greedy algorithm of \cite{btt89} for the class of scoring rules)
to find his GS-manipulation; in this case, his set of actions would consist of his truthful
vote and the output of this algorithm. Also, the player may choose to ignore GS-manipulations
that are (weakly) dominated by other GS-manipulations. 
Finally, a player may prefer not to change his vote beyond 
what is necessary to make his target candidate the election winner, either because 
he wants his vote to be as close to his true preferences as possible (see Obraztsova and Elkind, \citeyear{obr-elk:c:opt}), or for fear of
unintended consequences of such changes in the complex environment of the game.

\section{2-by-2 Voting Manipulation Games}\label{sec:2by2}

In this section, we investigate which $2$-by-$2$ games (i.e., games with two players, and two actions per player)
can be represented as GS-games or Manipulator/Countermanipulator games. Our goal is to show that even in the so restricted framework we can realise 
a surprising variety of games.

\subsection{Representation of 2-player games}

To answer the question which $2$-by-$2$ games can be realised as voting manipulation games of a particular kind, we need a suitable classification
of $2$-by-$2$ games. Note, first, that every such game corresponds to $4$ action profiles, 
and is fully described by giving both players' preferences over these profiles.
By considering all possible pairs of preference relations over domains of size $4$,
Fraser and Kilgour (\citeyear{FK1986}) show that there are $724$ distinct $2$-by-$2$ games. However,
this classification is too fine-grained for our purposes. Thus, we propose a simplified
approach that is based on the following two principles. First, we only compare action 
profiles that differ in exactly one component. Second, when comparing two profiles
that differ in the $i$-th component ($i=1, 2$), we only take into account the preferences 
of the $i$-th player. Thus, every $2$-by-$2$ game can be represented by a diagram
with $4$ vertices and $4$ directed edges, where an edge is directed from a less
preferred profile to a more preferred profile and a bidirectional edge indicates indifference. 

\subsection{Two manipulators GS-game}

Now, let us focus on GS-games with $2$ players and $2$ actions per player, one of which is their sincere vote. 
For each player, let $s$ denote his sincere vote and let $i$ denote his manipulative vote;
thus, the vertices of our diagram are $(s, s)$, $(i, s)$, $(s, i)$, and $(i, i)$.
For two edges of this diagram their direction is determined by the fact
that $i$ is a GS-manipulation: namely, both of the edges adjacent to $(s, s)$
are directed away from $(s, s)$. Thus, by renaming the players if necessary, 
any $2$-by-$2$ GS-game for any voting rule can be represented by one of the six diagrams in Figure~\ref{fig:2-by-2-mm}.   
\begin{figure}
\begin{center}
\resizebox{11cm}{!}{\includegraphics{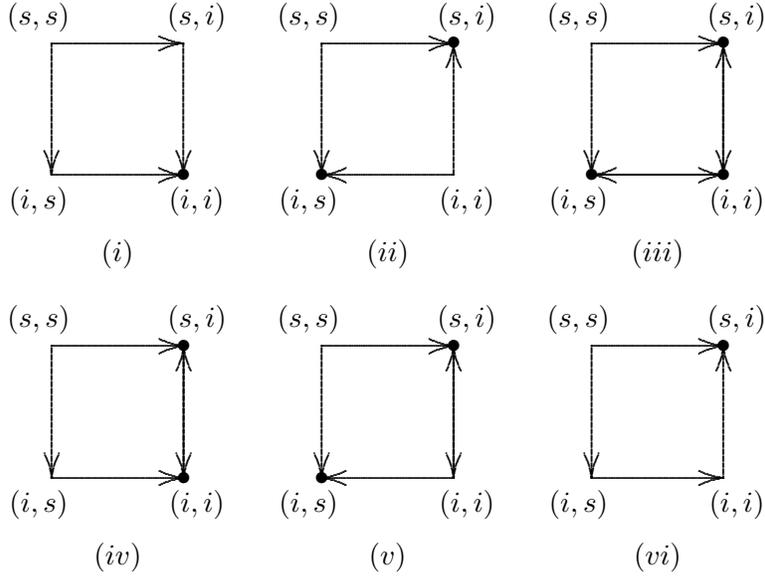}}
\end{center}
\vspace{-5mm}
\caption{Diagrams for $2$-by-$2$ GS-Games}\label{fig:2-by-2-mm}
\end{figure}
Observe that an action profile in a $2$-by-$2$ game is a Nash equilibrium
if and only if the corresponding vertex in the diagram of the game has
two incoming edges. 

The following proposition is immediate.

\begin{proposition}
Every $2$-by-$2$ GS-game has at least one Nash equilibrium.
\end{proposition}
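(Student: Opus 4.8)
The plan is to exploit the combinatorial structure that the excerpt has already set up: a $2$-by-$2$ GS-game is fully described by a diagram on four vertices $(s,s)$, $(i,s)$, $(s,i)$, $(i,i)$ with four directed (possibly bidirectional) edges, and a vertex is a Nash equilibrium precisely when both of its incident edges point into it. So the entire claim reduces to a purely graph-theoretic observation about these diagrams, and I would prove it at that level rather than returning to the voting rule itself.

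First I would recall the orientation constraint that the definition of a GS-manipulation forces on us: since $i$ is a GS-manipulation, the player strictly prefers the outcome after deviating from $s$ to $i$ against the other player's sincere vote, so both edges incident to the vertex $(s,s)$ are directed \emph{away} from $(s,s)$. This is exactly the fact used to cut the classification down to the six diagrams of Figure~\ref{fig:2-by-2-mm}. The key consequence I want to extract is that $(s,s)$ has two outgoing edges and therefore can never be a Nash equilibrium, but this also pins down enough of the orientation of the remaining edges to run a counting argument.

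The main step is a parity/counting argument on the directed $4$-cycle. The diagram is a $4$-cycle on the vertices $(s,s), (i,s), (i,i), (s,i)$, and each of the four edges contributes exactly one incoming arrowhead to one of its two endpoints (a bidirectional edge contributes an incoming arrowhead to \emph{both} endpoints, which only helps). Since $(s,s)$ receives zero incoming edges, the four (or more) arrowheads are distributed among the remaining three vertices $(i,s), (i,i), (s,i)$; by pigeonhole at least one of these three vertices receives at least two incoming arrowheads, i.e.\ both of its incident edges point into it. By the characterisation stated just before the proposition, that vertex is a Nash equilibrium. Rather than invoking pigeonhole abstractly, the cleanest route is simply to trace the cycle: starting from $(s,s)$, both adjacent edges leave $(s,s)$, so $(i,s)$ and $(s,i)$ each already have one incoming edge; then whichever way the two edges incident to $(i,i)$ are oriented, one checks directly that a vertex with two incoming edges must appear. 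Alternatively, since the six diagrams are explicitly drawn in Figure~\ref{fig:2-by-2-mm}, I would inspect each of them and point to a vertex with two incoming edges, which makes the proof immediate and self-contained.

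The part requiring the most care is making sure the argument is robust to indifferences, i.e.\ bidirectional edges. I expect this to be the only genuine subtlety: the counting argument must treat a bidirectional edge as delivering an incoming arrowhead to both of its endpoints, so I would phrase the Nash condition as ``every incident edge is directed into the vertex (bidirectional edges counting as incoming),'' matching the statement in the excerpt that a vertex is a Nash equilibrium iff it has two incoming edges. With indifference counting in our favour the worst case for finding an equilibrium is when all edges are strictly oriented, so it suffices to verify the strict case, which the short cycle-tracing or the direct inspection of the six diagrams handles cleanly.
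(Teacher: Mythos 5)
Your proposal is correct and follows essentially the same argument as the paper: both work entirely at the level of the diagram, use the fact that GS-manipulation forces both edges at $(s,s)$ to be strictly outgoing, and conclude that one of the three remaining vertices must have two incoming edges. Your pigeonhole phrasing is just a repackaging of the paper's case analysis (if neither $(i,s)$ nor $(s,i)$ is an equilibrium, both their remaining edges point into $(i,i)$, which is then an equilibrium), and your cycle-tracing alternative is literally that argument.
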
 

\begin{proof}
Since both insincere strategies are GS-manipulations we have arrows from $(s,s)$ both to $(i,s)$ and $(s,i)$. For these not to be Nash equilibria both arrows from $(i,s)$ to $(i,i)$ and from $(s,i)$ to $(i,i)$ must be directed towards $(i,i)$. Bit then $(i,i)$ is a Nash equilibrium.
\end{proof}

On the six diagrams in Figure~\ref{fig:2-by-2-mm} Nash equilibria are marked by black dots.

\begin{example}\label{ex:plurality-ii}
{\em
Consider the GS-game for the preference profile $ (v_1,v_2,v_3,v_4)=(abc, bac, cab, cba)$
under the Plurality voting rule, with ties broken according to $a>b>c$.
In this game players $1$ and $2$ are the GS-manipulators; 
their GS-manipulations are $v_1^* = v_1[a;b]$
and $v^*_2=v_2[b;a]$, which result in election of $b$ and $a$, respectively. Note that, if both GS-manipulators
vote insincerely, $c$ remains the election winner. Thus, this game
corresponds to diagram (ii) in Figure~\ref{fig:2-by-2-mm}.}
\end{example}

We will say that a diagram $D$ in Figure~\ref{fig:2-by-2-mm} is {\em realisable} as a diagram of a GS-game by a voting rule $\calR$ on $n$-voter profiles
if there exists a preference profile $V$ consisting of at most $n$ voters and a $2$-by-$2$ game $G\in\GS(V, \calR)$
such that $D$ is the diagram for $G$. We also say that $D$ is {\em realisable} by a voting rule $\calR$ if there exists 
a profile $V$ (without restriction on the number of voters) such that $D$ is the diagram for~$G$. 

Our next goal is to understand which diagrams are realisable by the voting rules we have chosen to concentrate on in this paper, i.e., $k$-Approval voting rules. Me found that most of the diagrams are already relaized by Plurality and the remaining ones by 2-Approval. 

\begin{theorem}
The only diagrams realizable by Plurality are (ii), (iii), (iv) and (v).
\end{theorem}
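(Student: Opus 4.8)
The plan is to prove the two halves separately. For realisability I would exhibit one explicit Plurality profile per diagram: (ii) is already produced by Example~\ref{ex:plurality-ii}, and for the other three I would write down profiles on four or five candidates, using a few extra voters purely to fix the scores, and verify by hand that the two single manipulations and the joint manipulation produce the intended winners.

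The heart of the theorem is the impossibility half, and it rests on a single structural fact about Plurality. Write $\calR(s,s),\calR(i,s),\calR(s,i),\calR(i,i)$ for the winners at the four profiles and set $w=\calR(s,s)$. Since under $1$-Approval a vote is determined by its top candidate and the only way to change the winner is to raise the score of one's new top, a GS-manipulation moves the manipulator's support onto his most preferred winnable candidate; thus player~$1$ puts $p_1:=\calR(i,s)$ on top with $p_1\succ_1 w$, player~$2$ puts $p_2:=\calR(s,i)$ on top with $p_2\succ_2 w$, and, writing $a,b$ for the two sincere tops, we have $a\neq p_1$, $b\neq p_2$, and $w\notin\{a,b,p_1,p_2\}$ (as $w$ is neither manipulator's top nor a manipulation target). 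The key claim is that $\calR(i,i)\in\{p_1,p_2,w\}$ and that $\calR(i,i)=w$ forces $a=p_2$ and $b=p_1$.

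To prove the claim I would use head-to-head comparisons preserved across profiles. Provided $a\neq p_2$, passing from $(s,i)$ to $(i,i)$ leaves the score of $w$ unchanged (as $w\notin\{a,b,p_1,p_2\}$) and does not decrease the score of $p_2$; since $p_2$ beats $w$ in the tie-broken order at $(s,i)$, it still does at $(i,i)$, so $w$ cannot win there. The symmetric comparison with $(i,s)$ shows $w$ cannot win unless $b=p_1$, and in the remaining full-swap case $a=p_2$, $b=p_1$ the two vote changes cancel, the scores revert exactly to those at $(s,s)$, and $\calR(i,i)=w$. Once $w$ is excluded, the same bookkeeping confines the winner to $\{p_1,p_2\}$; the delicate point, which I expect to be the main obstacle, is the sub-case $b=p_1$, where the manipulator's own target loses a point and one must invoke the tie-break relations inherited from $(s,i)$ to rule out a spurious third candidate surfacing as the winner.

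Finally I would read off the diagrams. If $\calR(i,i)=p_1$ then player~$2$'s profiles $(i,s)$ and $(i,i)$ have the same winner, so the free edge between them is bidirectional; symmetrically for $p_2$. Hence whenever $\calR(i,i)\in\{p_1,p_2\}$ at least one of the two free edges (those not incident to $(s,s)$) carries an indifference, while the only alternative, $\calR(i,i)=w$, occurs exactly in the full-swap case, where at $(i,s)$ the winner is $p_1=b$, voter~$2$'s own top, and at $(s,i)$ it is $p_2=a$, voter~$1$'s own top, so both manipulators strictly prefer the off-diagonal profile to $(i,i)$ and both free edges point away from $(i,i)$ — this is diagram~(ii). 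Thus every realisable $2$-by-$2$ Plurality diagram either has a bidirectional free edge or is (ii). The two diagrams whose free edges are both strictly directed and which are not the full-swap configuration — the one with both free edges pointing into $(i,i)$ (unique equilibrium at $(i,i)$), and the one with a unique equilibrium lying off the diagonal — therefore cannot arise; these are precisely (i) and (vi), leaving (ii), (iii), (iv) and (v) as the only realisable diagrams.
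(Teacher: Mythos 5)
Your impossibility argument is correct and follows essentially the same route as the paper's: track Plurality scores across the four action profiles, note that relative to $(s,s)$ only the two targets $p_1,p_2$ can gain a point while $w$'s score never changes, conclude $\calR(i,i)\in\{w,p_1,p_2\}$, identify $\calR(i,i)=w$ with the full-swap case (which is exactly diagram (ii), since there each off-diagonal winner is the other player's sincere top, so both free edges point away from $(i,i)$), and observe that $\calR(i,i)\in\{p_1,p_2\}$ forces one of the two free edges to be bidirectional, which rules out (i) and (vi). You are in fact more explicit than the paper, which merely asserts that the winner at the doubly-manipulated profile is $w$, $a$ or $b$ and that the case $w$ ``corresponds to diagram (ii)''. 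One remark: the sub-case $b=p_1$ that you flag as the main obstacle is not actually delicate. Comparing $(i,i)$ directly with $(s,s)$, every candidate outside $\{p_1,p_2\}$ has a score no larger than at $(s,s)$ while $w$'s score is unchanged, so any such candidate that lost to $w$ at $(s,s)$ still loses to $w$ at $(i,i)$; no spurious third winner can appear and no further tie-break analysis is needed.

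The genuine gap is the realizability half. The theorem also asserts that (ii), (iii), (iv) and (v) \emph{are} realizable, and for (iii), (iv) and (v) your proposal only promises profiles (``I would write down profiles\ldots and verify by hand'') without producing them, so half of the statement is left unproven. These witnesses are not entirely mechanical either: (iii) needs both manipulators to target the same candidate, so that all three manipulated profiles have the same winner; (iv) and (v) need distinct targets $p_1\neq p_2$ with, say, $\calR(i,i)=p_1$, and are then distinguished solely by whether player 1 ranks $p_1$ above $p_2$ (diagram (iv)) or below it (diagram (v)) --- a preference comparison that plays no role in the score bookkeeping and must be engineered separately in the profile. The paper completes this step with explicit three-voter profiles and suitable tie-breaking orders, one per diagram; your proof is not complete until you exhibit and check such profiles.
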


\begin{proof}
Consider a profile $V$, and assume that voters $1$ and $2$ are the only GS-manipulators in $V$.
Suppose that the Plurality winner in $V$ is $w$, voter $1$ manipulates in favor of $a$, 
and voter $2$ manipulates in favor of $b$. Since $1$ and $2$ are GS-manipulators, 
we have $w\neq\tp(v_1)$, $w\neq\tp(v_2)$, and we can assume that the GS-manipulations
of voters $1$ and $2$ are given by $v_1^*=v_1[\tp(v_1);a]$, $v_2^*=v_2[\tp(v_2);b]$.
Let $V^1=(V_{-1},v^*_1)$, $V^2=(V_{-2},v^*_2)$, $V^{1,2}=(V^1_{-2},v^*_2)$.

The winner in $V^{1,2}$ can be $w$, $a$, or $b$, and the case where the winner is $w$ corresponds to diagram (ii). 
Further, if $a=b$, then $a$ is the winner at $V^1$, $V^2$, and $V^{1,2}$; this corresponds to diagram (iii).
Thus, suppose that the winner at $V^{1,2}$ is $a$ or $b$ and $a\neq b$.
This means that one of the arrows adjacent to $(i, i)$ must be bidirectional, ruling out the three diagrams (i), (ii) and~(vi). 
  We have thus proved that diagrams (i) and (vi) are not realizable by Plurality.
  
Example~\ref{ex:plurality-ii} shows how to realize diagram (ii). We now construct examples for the remaining three cases. 
Diagram (iii) can be realized in profile $V=(cawb, bawc, wabc)$ with ties broken according to $w>a>b>c$. The winner at $V$ is $w$, and both the first and second players can manipulate in favor of $a$, which is therefore the winner at all manipulated profiles $V^1$, $V^2$, and $V^{1,2}$.

Consider now the profile $V=(dabwc, cbawd, wbacd)$ with tie-breaking order $w>a>b>c>d$. Candidate $w$ is the winner in $V$, Voter $v_1$ manipulates in favor of $a$ (horisontal arrow), and $v_2$ in favor of $b$. If both players manipulate, the result is still $a$ by the tie-breaking order. Since $v_1$ prefers $a$ to $b$ this example realises diagram (iv).


Diagram (v) can be obtained on profile $V=(bacw, cbaw, wbac)$, and use $a>w>b>c$ as a tie-breaking rule. Candidate $w$ is the winner at $V$, candidate $a$ is winning at $V^{1}$ and $V^{1,2}$, and candidate $b$ is winning at $V^2$, however this time $v_1$ prefers $b$ to $a$ and hence regrets his choice of manipulating.
\end{proof}

\begin{theorem}
Diagrams (i) and (vi) are both realizable by 2-Approval voting rule.
\end{theorem}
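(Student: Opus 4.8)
The plan is to realise both diagrams by explicit $2$-Approval profiles, one game per diagram, and to verify the four winners and the four edge directions by direct score computation. The starting point is the structural observation extracted from the proof of the preceding theorem: a bidirectional edge next to $(i,i)$ occurs exactly when the winner at $(i,i)$ equals the winner at $(i,s)$ or at $(s,i)$. Diagrams (i) and (vi) are precisely the two diagrams that are neither (ii) nor carry a bidirectional edge, so to realise them I must force the four GS-profiles $(s,s),(i,s),(s,i),(i,i)$ to yield four \emph{pairwise distinct} winners $w,a,b,r$. This is exactly where Plurality fails and $2$-Approval succeeds: under Plurality the winner at $(i,i)$ must lie in $\{w,a,b\}$ (a lone manipulator raises a single challenger), whereas under $2$-Approval both manipulators can add one common candidate $r$ to their approved pairs, giving $r$ two extra points so that it overtakes $a$ and $b$ only when \emph{both} manipulate.

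Concretely I would use candidates $\{w,a,b,r,f,g\}$, with four non-strategic voters providing a background score of $2$ to each of $w,a,b,r$ and nothing to $f,g$ (two voters with $\tp_2=\{w,a\}$ and two with $\tp_2=\{b,r\}$). The two strategic voters $1$ and $2$ sincerely approve the two ``phantom'' candidates, i.e.\ $\tp_2(v_1)=\tp_2(v_2)=\{f,g\}$, and their chosen GS-manipulations are the votes with $\tp_2=\{a,r\}$ and $\tp_2=\{b,r\}$. Breaking ties by $w>a>b>r>f>g$, the $2$-Approval scores give $w$ at $(s,s)$ (all of $w,a,b,r$ tie at $2$, $w$ wins the tie), $a$ at $(i,s)$ (here $a,r$ tie at $3$, $a$ wins), $b$ at $(s,i)$ (now $b,r$ tie at $3$, $b$ wins), and $r$ at $(i,i)$ (here $r$ reaches $4$ against $3$ for $a$ and $b$). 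One must also check that $1,2$ are genuine GS-manipulators whose best \emph{achievable} candidates are $a$ and $b$ — their phantom favourites $f,g$ are unreachable because each voter can give them at most one point, so they never beat $w$ — and that the four background voters are not GS-manipulators (the $\{w,a\}$-voters already elect their favourite $w$, and the $\{b,r\}$-voters, whom I rank $b\succ r\succ w\succ\cdots$, can neither lower $w$'s score nor lift $b$ or $r$ above it).

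It then remains only to fix the lower halves of voters $1$ and $2$ to orient the two edges at $(i,i)$. For the diagram with both edges pointing \emph{into} $(i,i)$ (unique equilibrium at $(i,i)$) I take $1\colon fgarwb$ and $2\colon fgbrwa$; then voter $2$ prefers $(i,i)=r$ to $(i,s)=a$ and voter $1$ prefers $(i,i)=r$ to $(s,i)=b$, making $(i,i)$ the unique Nash equilibrium. For the remaining diagram I change only voter $1$'s order to $fgabrw$, so that voter $1$ now prefers $(s,i)=b$ to $(i,i)=r$; this reverses the edge from $(i,i)$ to $(s,i)$ while leaving everything else (and all four winners) unchanged, moving the unique equilibrium off the main diagonal to $(s,i)$. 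Since the scores, and hence the winners, are identical in both variants, these two games differ only by one transposition in voter $1$'s ranking, and they realise precisely the two diagrams (i) and (vi) left open by the previous theorem.

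The main obstacle is the built-in symmetry between the profiles $(i,s)$ and $(s,i)$. If each manipulator simply kept its target and swapped $w$ for $r$, the two profiles would have identical score vectors and therefore the same winner, collapsing the construction; the same phenomenon blocks Plurality. The device that defeats this — and which has no Plurality analogue — is to let each manipulation \emph{add} its private target ($a$ for voter $1$, $b$ for voter $2$) rather than already approve it, so that $a$ leads at $(i,s)$ and $b$ leads at $(s,i)$. This is exactly what forces the manipulators' sincere approvals onto the two unreachable candidates $f,g$, and hence why the construction needs the pair of phantom favourites; checking that these phantoms really are unachievable and that no background voter is a GS-manipulator is the only genuinely fiddly part of the verification.
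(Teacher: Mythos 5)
Your proposal is correct and follows essentially the same route as the paper's own proof: an explicit construction in which each manipulator's sincere approvals go to unreachable ``phantom'' candidates, each GS-manipulation approves the manipulator's private target plus a shared fourth candidate that wins only when both manipulate, and diagrams (i) and (vi) are then distinguished solely by where that fourth candidate sits in a manipulator's ranking relative to the other's target. The differences are cosmetic --- you use six voters with background scores of $2$ where the paper uses three voters with scores of $1$ and tie-breaking --- and your score computations, GS-manipulation checks, and edge orientations all verify.
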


\begin{proof}
In both cases we will consider the alphabet tie-breaking, i.e., the tie-breaking order is $a>b>\ldots >y>z$.
\begin{description}
\item  Diagram (i). Let $V =(v_1,v_2,v_3) = (xywa\ldots, ztwb\ldots,cd\ldots)$. The 2-Approval winner at $V$ is $c$. The first two voters
are the GS-manipulators with manipulations $v_1^*=v_1[xy;aw]$ and $v_2^*=v_2[zt;bw]$ in favor of $a$ and $b$, respectively. Further, at $(v_1^*, v_2^*, v_3)$ the 2-Approval winner is $w$ which both manipulators prefer over $a$ and $b$.

\item Diagram (vi). Let $V =(v_1,v_2,v_3)=(xywa\ldots, ztbaw\ldots,cd\ldots)$. The 2-Approval winner at $V$ is $c$. The first two voters
are the GS-manipulators with manipulations $v_1^*=v_1[xy;aw]$ and $v_2^*=v_2[zt;bw]$ in favor of $a$ and $b$, respectively.
At $(v_1^*, v_2^*, v_3)$ the 2-Approval winner is $w$, in which case voter 2 would regret manipulating as he'd rather prefer $a$ than $w$. 
This realizes diagram~$(vi)$.
\end{description}
\vspace{-0.7cm}
\end{proof}

We note that the second manipulator in the realisation of diagram (vi) have chosen an illogical manipulation promoting $w$ and not promoting $a$. We will call such manipulations {\em unsound}. Soundness of manipulations  will be important issue later. 

We note that for the Borda rule all six diagrams are realizable as well (see~\cite{ElkindEtAlIJCAI2015} for the proof).

\subsection{Manipulator/Countermanipulator Games}

{\color{black}
In this section we consider voting manipulation games where there are two strategic players, one of which is a GS-manipulator and another is a countermanipulator.   A GS-manipulator has a GS-manipulation that change the outcome of the election in his favour; and some other player, the ``countermanipulator''  cannot manipulate but can perform a countermanipulation that `neutralises' to some extent the action of the manipulator (such as voter 4 in Example~\ref{ex1}. We call such games Manipulator/Countermanipulator games. We focus on $2$-by-$2$ games of this type---i.e., one manipulator and one countermanipulator. 
There are six possible forms of normal games that can be achieved in a $2$-by-$2$ Manipulator/Countermanipulator game, these are forms (i)-(vi)  presented in Figure~2. As before, we denote by $s$ the sincere action of the player and by $i$ his insincere action. 
}

\begin{theorem}
Only the forms (i)-(vi) are possible to emerge as diagrams of $2$-by-$2$ Manipulator/Counter Manipulator games. All six diagrams are realisable under 2-Approval. Only (iii) is realisable for two voters under Plurality and (vi) can be realised with more than two voters under Plurality.
\end{theorem}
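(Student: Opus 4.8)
The plan is to split the statement into three pieces: an \emph{upper bound} showing at most six diagrams can occur, a \emph{realisability} part for $2$-Approval, and a \emph{Plurality} analysis that separates the two-voter and many-voter cases.

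\medskip
\noindent\textbf{At most six forms.} Write $s_1,i_1$ for the sincere and manipulative votes of the manipulator (player $1$) and $s_2,i_2$ for the sincere vote and the countermanipulation of the countermanipulator (player $2$), so the four vertices $(s_1,s_2),(i_1,s_2),(i_1,i_2),(s_1,i_2)$ form a $4$-cycle. First I would show that three of the four edges are forced. The edge $(s_1,s_2)$--$(i_1,s_2)$ changes only player $1$'s component, and since $i_1$ is a GS-manipulation player $1$ strictly prefers $(i_1,s_2)$; hence it points out of $(s_1,s_2)$. The edge $(i_1,s_2)$--$(i_1,i_2)$ changes only player $2$'s component, and since $i_2$ is a countermanipulation against $i_1$ player $2$ strictly prefers $(i_1,i_2)$; hence it points out of $(i_1,s_2)$. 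The edge $(s_1,s_2)$--$(s_1,i_2)$ also changes player $2$'s component, but because player $2$ is \emph{not} a GS-manipulator at $V$ the vote $i_2$ cannot improve his outcome at $V$, so this edge either points into $(s_1,s_2)$ or is bidirectional ($2$ possibilities). Only the remaining edge $(s_1,i_2)$--$(i_1,i_2)$, governed by player $1$'s preference and unconstrained by either definition, is free ($3$ possibilities). Since the two players have fixed, distinct roles there is no relabelling symmetry, so we get exactly $2\times 3 = 6$ distinct diagrams, which are forms (i)--(vi).

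\medskip
\noindent\textbf{Realisability under $2$-Approval.} Here I would give, for each of the six forms, an explicit profile in the style of the proof associated with Figure~\ref{fig:2-by-2-mm}: a manipulator (voter $1$), a countermanipulator (voter $2$), and one or more filler voters used to shape the score landscape. The two-point structure of $2$-Approval lets one control almost independently the scores of the sincere winner $w$, the manipulation target $p$, and the candidate promoted by the countermanipulation, so that the winners at the four profiles produce the desired edge directions. For each construction the three checks are that voter $1$ is genuinely a GS-manipulator, that voter $2$ is \emph{not} a GS-manipulator at $V$ but \emph{is} a countermanipulator at $(i_1,s_2)$, and that the four winners induce the intended diagram; these reduce to routine score and tie-breaking bookkeeping.

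\medskip
\noindent\textbf{Plurality.} For two voters both are strategic and each gives one point to his top candidate. If $\tp(v_1)=\tp(v_2)$ that candidate wins with two points and neither voter can manipulate, so the tops differ and the sincere winner is the $>$-higher of them; for voter $1$ to be a manipulator this winner must be $w=\tp(v_2)$ with $\tp(v_1)$ below $w$ in the order $>$. Voter $1$'s only manipulations make some $x>w$ win (by voting for $x$), so the manipulated winner is $x$; voter $2$'s countermanipulation must then promote some $y>x$, and since $y>x>w>\tp(v_1)$ the winner at $(s_1,i_2)$, being the $>$-maximum of $\{\tp(v_1),y\}$, equals $y$, which is also the winner at $(i_1,i_2)$. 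Thus the edge $(s_1,i_2)$--$(i_1,i_2)$ is forced bidirectional while the edge to $(s_1,s_2)$ points inward, leaving a single form, namely (iii). For form (vi) with more than two voters I would then break exactly the coincidence ``winner at $(s_1,i_2)$ equals winner at $(i_1,i_2)$'' that the two-voter case forces, by adding filler voters, and exhibit a Plurality profile with at least three voters whose four winners realise (vi).

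\medskip
\noindent The main obstacle is the explicit constructions: one must simultaneously enforce the three role constraints (manipulator, non-manipulator, countermanipulator) \emph{and} pin down all four winners for each target diagram, the subtlest being the ``off-role'' winner at $(s_1,i_2)$, where neither strategic voter acts in his intended capacity. The rigidity behind the two-voter Plurality case rests entirely on the forced chain $y>x>w>\tp(v_1)$, which is the crux of that part and also explains why breaking it requires the extra voters used for form (vi).
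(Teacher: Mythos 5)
Your upper-bound count is correct and is actually a useful addition (the paper asserts the ``only six forms'' claim without an explicit argument): three of the four edges are constrained as you say, giving $2\times 3=6$ diagrams with no relabelling symmetry since the two roles are fixed. Your two-voter Plurality analysis via the forced chain $y>x>w>\tp(v_1)$ is also essentially the paper's own argument. However, the rest has genuine gaps. The $2$-Approval part is not a proof: you defer every construction to ``routine score and tie-breaking bookkeeping,'' whereas those explicit profiles are precisely the content of that claim. The paper discharges it with a table of concrete profiles (a common tie-breaking order $x>e>a>b>c>d>u$, a manipulator, a countermanipulator, and carefully placed dummy candidates), and nothing in your text certifies that all six forms can be hit while simultaneously keeping voter $2$ a non-manipulator at $V$ and a best-responding countermanipulator at $(i_1,s_2)$ --- the ``off-role'' constraint you yourself flag as the subtle one.

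More seriously, your plan for realising form (vi) under Plurality with extra voters rests on a false premise. You propose to break the coincidence ``winner at $(s_1,i_2)$ equals winner at $(i_1,i_2)$,'' but under Plurality this coincidence is forced for \emph{any} number of voters, not just two: the countermanipulation hands voter $2$'s point to some candidate $y$ (possibly the sincere winner $a$ itself), and since $y$ beats every candidate at $(i_1,i_2)$, moving voter $1$'s point back from $x$ to $\tp(v_1)$ --- a candidate that already lost to $a$ at the sincere profile --- cannot dethrone $y$; hence $y$ also wins at $(s_1,i_2)$, and the edge $(s_1,i_2)$--$(i_1,i_2)$ is always bidirectional. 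So the diagram you aim at is not realisable by Plurality at all. What extra voters actually buy is different: with three or more voters the sincere winner $a$ need not be voter $2$'s top candidate (with two voters it must be, which is the real source of the two-voter rigidity), so voter $2$ can countermanipulate by promoting $a$ itself; then $a$ wins both at $(i_1,i_2)$ and at $(s_1,i_2)$, turning the edge $(s_1,s_2)$--$(s_1,i_2)$ bidirectional instead of inward-pointing. That is the paper's route to form (vi), and it is the step your sketch would fail to reproduce.
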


\begin{figure}
\label{fig:2-by-2-mc}
\begin{center}
\resizebox{12cm}{!}{\includegraphics{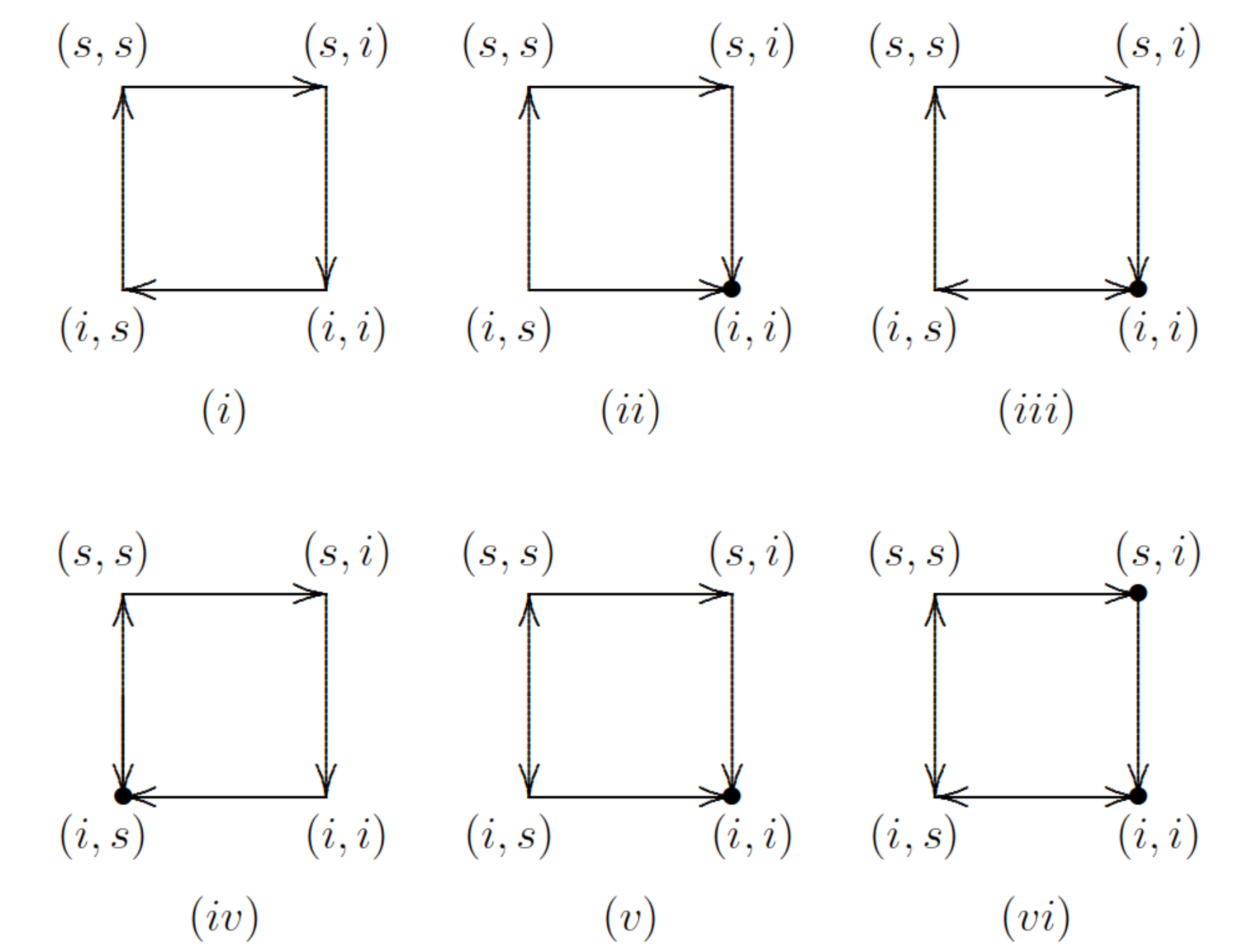}}
\end{center}
\vspace{-5mm}
\caption{Diagrams for $2$-by-$2$ Manipulator/Countermanipulator Games}
\end{figure}

\begin{proof}
Let us introduce the relation $c\sqsupset_V c'$ on candidates $c,c'\in C$ which means that at profile $V$ either $c$ has higher Plurality score than $c'$ or they have equal scores but $c> c'$.

\textbf{Plurality:} It can easily be seen that only one form is possible with two voters under Plurality. Suppose our two voters, 1 and 2, have favourite candidates $a$ and $w$ respectively, with $w$ being higher on the tie-breaking order so $w$ wins (note, if both have the same favourite neither can be a manipulator). Then voter 1 may manipulate in favour of a new candidate $x$ which beats $w$ on tie-break by submitting it first instead of $a$. The only way 2 can countermanipulate is promoting a new candidate $y$ that beats $x$ on tie break. Then $y>x>w>a$. If 1 plays sincerely, but 2 countermanipulates,  then $y$ wins again but 2 would regret countermanipulating. Thus we get diagram (iii)  and it is the only possible diagram realisable by Plurality on 2-voter profiles. 

If there are more than two voters, then diagram ($vi$) is also realisable but no other. Again suppose $a$ wins with $s$ points at sincere profile $V$ and that voter 1 has a manipulation in favour of some candidate $x$, where $x$ originally either also had $s$ points with $a>x$ or $x$ had $s-1$ points with $x>a$, obtaining a profile $V_1$ with $x$ as the winner. Then voter 2, as the counter manipulator, can either promote a new candidate, $y$, to beat $x$, or promote $a$ and make him winner again at profile $V_2$ depending on which is the best candidate in favour of whom voter 2 can countermanipulate.

If $a$ was not voter  2's best candidate in favour of whom he can countermanipulate but~$y$, then $V=(bx\ldots, cy\ldots, \ldots)$, $x$ wins at $V_1=(xb\ldots, cy\ldots, \ldots)$,  and  $y$  wins at $V_2=(xb\ldots, yc\ldots, \ldots)$. Then  $y\sqsupset_{V_2} x\sqsupset_{V_2} a$ and hence $y\sqsupset_V x$. If 1, at $V_2$, reverses his manipulative vote leading to $V_3=(bx\ldots, yc\ldots, \ldots)$, then $y$ would still win at $V_3$ since $y\sqsupset_{V_3} a $ (their positions are the same as in $V_2$) and $a\sqsupset_{V_3}b$ (their positions are the same as in $V$. But voting for $y$ is not a manipulation for voter 2. Hence we have a game with diagram (iii).

If $a$ was voter 2's best candidate in favour of whom he can countermanipulate, then $V=(bx\ldots, ca\ldots, \ldots)$, $x$ wins at $V_1=(xb\ldots, ca\ldots, \ldots)$,  and after voter 2 promoted $a$ she becomes unbeatable at $V_2=(xb\ldots, ac\ldots, \ldots)$ and also at $V_3=(bx\ldots, ac\ldots, \ldots)$. We therefore have diagram (vi). 

\textbf{2-Approval:} Below in Table~\ref{table:examples} are examples of profiles for each diagram under 2-Approval voting with two voters (apart from (iii) which has already been relised with two voters under Plurality). For all profiles the tie-breaking rule is $x>e>a>b>c>d>u$ and $a$ wins under sincere votes. Voter 1 is the manipulator (column player) and voter 2 is the countermanipulator (row player). This proves the theorem.\qedhere

\begin{table}[!h]
  \centering
  	\quad
    \begin{tabular}{rclc}
    \toprule
    Diagram & Voter & Preferences &        Strategic votes \\
    \midrule
    \multirow{2}{*}{($i$)} & 1     & $c\, d \,|\, e\,x\,b\,a$ & $v_1[e;d]$ \\
     & 2     & $a\,b \,|\, d\, x\, c\,e$ & $v_2[ab;dx]$ \\
    \midrule
    \multirow{2}{*}{($ii$)} & 1     & $c\, d \,|\, e\,u\,x\,b\,a$ & $v_1[cd;eu]$ \\
     & 2     & $a\,b \,|\,u\, x\, d\, c\,e$ & $v_2[ab;ux]$ \\
    \midrule
    \multirow{2}{*}{($iv$)} & 1     & $b\, a \,|\, u\,c\,d$ & $v_1[a;u]$ \\
     & 2     & $c\,d \,|\,u\,a\,b$ & $v_2[d;u]$ \\
    \midrule
    \multirow{2}{*}{($v$)} & 1     & $c\, d \,|\, u\,e\,x\,b\,a$ & $v_1[cd;eu]$ \\
     & 2     & $a\,b \,|\,u\, x\, d\, c\,e$ & $v_2[b;u]$ \\
    \midrule
    \multirow{2}{*}{($vi$)} & 1     & $b\, a \,|\, u\,c\,d$ & $v_1[a;u]$ \\
     & 2     & $c\,d \,|\,a\,u\,b$ & $v_2[d;a]$ \\
    \bottomrule
    \end{tabular}
    \caption{Realisations of diagrams (i),(ii),(iv)-(vi) under 2-Approval voting rule.}
  \label{table:examples}
\end{table}
\end{proof}

We note that apart from the first diagram all the remaining ones have at least one Nash equilibrium.

\section{Nash Equilibria}\label{sec:nash}

One of the most important characteristics of any game is whether it has Nash equilibria (NE) in pure strategies or not. Various schools of thought provide different opinions on whether or not players will end up choosing equilibrium strategies. We are not plunging into these debates, we simply consider the existence of an NE as an important feature of the game. 

We saw that in the presence of countermanipulators we cannot expect a voting manipulation game to have a NE. Indeed, the manipulator is happy, when the countermanipulator is unhappy and vice versa. However, in the case of a GS-game, when all strategic players are manipulators, things are not so obvious. 

In this section, we study the existence of Nash equilibria in GS-games for $k$-Approval with $k=1,2,3,4$. We start with the easiest case which is Plurality.

\subsection{Voting Manipulation Games under Plurality}

{\color{black} 
We will first show that for Plurality, a Nash equilibrium always exists for every voting manipulation game, even in presence of countermanipulators.

\begin{theorem}\label{thm:ne-plurality}
Any voting manipulation game under Plurality has a Nash equilibrium in pure strategies.
\end{theorem}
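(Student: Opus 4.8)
The plan is to reduce the claim to a purely combinatorial statement about Plurality and then settle it by best-response dynamics. Since under Plurality two votes are equivalent exactly when they share the same top candidate, each player $i$ effectively only chooses which candidate $t_i$ to place on top, ranging over the set $T_i=\{\tp(v):v\in A_i\}$, which by the definition of a voting manipulation game contains $i$'s sincere favourite $\tp(v_i)$. Writing $f_c$ for the fixed number of non-strategic voters whose top candidate is $c$, the winner at an action profile is the candidate maximising $\scr_1(c)=f_c+|\{i\in N: t_i=c\}|$ under the tie-break order $>$, and player $i$ ranks outcomes by $\succ_i$. I would first observe that nothing below uses whether a player is a GS-manipulator or a countermanipulator, so establishing a pure Nash equilibrium for this abstract game proves the theorem for all voting manipulation games, countermanipulators included.

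Next I would run best-response dynamics starting from the all-sincere profile $V^0$, in which every player tops their favourite and the winner is $\calR(V)$. To show the process terminates I would track the monovariant $\Phi(P)=(\scr_1(w,P),\ \mathrm{rank}(w))$, where $w$ is the winner and $\mathrm{rank}$ measures position in $>$ (earlier is better), compared lexicographically. The easy case is a \emph{join} move: a deviating player $i$ not currently topping the winner $w$ switches to top some $c$ with $c\succ_i w$ and makes $c$ win. Then $\scr_1(w)$ is unchanged while $c$ attains $\scr_1(c)+1$, and since $c$ must beat $w$ we get either a strictly larger winning score, or the same score with $c>w$; either way $\Phi$ strictly increases. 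As $\Phi$ takes finitely many values, no profile reachable through join moves can recur, and any profile admitting no beneficial deviation is by definition a Nash equilibrium.

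The main obstacle is the \emph{abandonment} case: the deviating player $i$ is itself topping the current winner $w$, so that $w$ loses a point and the new winner may carry a strictly smaller score, which would let $\Phi$ drop. Here I would exploit the distinguishing feature of our setting, namely that the sincere action is to vote for one's own favourite $\tp(v_i)$: along a trajectory issuing from $V^0$, a player tops a non-favourite candidate only as the outcome of an earlier best response that had made that candidate the winner, and the technical heart of the proof is to show that this history forbids a genuine decrease of $\Phi$ --- informally, a player is never forced to abandon a winner it has itself installed in exchange for a strictly lower-scoring alternative that it strictly prefers. I expect this bookkeeping over the trajectory to be the delicate part. Should a self-contained argument prove unwieldy, a fallback is to invoke the known convergence of best-response dynamics from the truthful profile under Plurality with a fixed linear tie-break \cite{mei-pol:c:convergence}: restricting player $i$ to $A_i$ only removes available deviations, so the same monovariant continues to preclude cycles, and the terminal profile of the dynamics is the desired pure Nash equilibrium.
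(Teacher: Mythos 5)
Your proposal has a genuine gap, and it sits exactly where you yourself place the ``technical heart'': the abandonment case is never resolved, and the fallback you offer does not close it. The iterative-voting convergence results you invoke are proved for voters who best-respond over the \emph{full} ballot space, and your transfer claim --- ``restricting player $i$ to $A_i$ only removes available deviations, so the same monovariant continues to preclude cycles'' --- is false as a matter of logic: deleting strategies removes edges from the best-response graph but also \emph{creates} new ones, because a move that was not a best response in the unrestricted game (a better ballot existed) becomes one once that better ballot is missing from $A_i$. So the restricted dynamics is not a sub-dynamics of the unrestricted one, and acyclicity does not transfer. Concretely, the unrestricted analysis rests on the fact that a plurality best response always places the new winner on top; with restricted ballots this fails. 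Suppose player $i$ currently tops the winner $w'$, which has score $3$; a rival $b$ with $b\succ_i w'$ also has score $3$ but loses the tie-break; $A_i$ contains no ballot topping $b$ but does contain one topping a zero-score candidate $d$. Then $i$'s best response \emph{within $A_i$} is the withdrawal ballot topping $d$: it elects $b$ with winning score $3$ and a worse tie-break position, so your $\Phi$ strictly decreases, whereas the unrestricted best response (top $b$, winning score $4$) increases it. Nothing in your write-up rules such moves out along the trajectory, and the bookkeeping lemma you hope for (``a player never abandons a winner it installed in exchange for a lower-scoring alternative'') is precisely what would have to be proved. Note also that you are attempting something strictly stronger than the theorem: existence of one equilibrium, not convergence of best-response dynamics from the truthful profile, is what is needed; the paper's own Example~\ref{ex:no_eq-2-Approval} shows that for $2$-Approval such dynamics can cycle, so any convergence proof must lean on plurality-specific structure that your reduction has abstracted away.

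That abstraction is the second problem. The paper's proof is not dynamic at all: it directly constructs an equilibrium using exactly the structure you discard when you declare that ``nothing below uses whether a player is a GS-manipulator or a countermanipulator.'' Under Plurality, every GS-manipulation tops the single candidate it promotes (Lemma~\ref{mantypes}: only Type~1 manipulations exist for $k=1$), a GS-manipulator never sincerely tops the current winner, and a countermanipulator cannot lower the score of a candidate promoted by someone else. The paper takes the $\sqsupset_V$-maximal candidate $p$ among those in whose favour a GS-manipulation exists and lets one manipulator play it; either a second supporter of $p$ exists (then both play and $p$ is untouchable), or no GS-manipulator can overturn $p$, and a possible countermanipulation is absorbed by adding the $\sqsupset_V$-maximal one; verifying that the resulting profile is a Nash equilibrium is a short case analysis with no induction over a trajectory. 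By contrast, your route requires the much stronger claim that \emph{every} plurality game with arbitrary restricted top-sets (each containing the sincere top) has a pure equilibrium reached by best-response dynamics; that claim does not follow from the literature you cite, and your proposal contains no proof of it.
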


\begin{proof}
Fix a profile $V$ amd let $w$ be the Plurality winner at $V$ with score $t$. Let $S\subset C$ be the set of candidates such that either their score is $t$ and they lose $w$ on the tie-break or their score is $t-1$ and they are higher than $w$ on the tie-breaking order. In particular, $w\in S$. Let $S^+\subset S$ be the set of candidates in favour of whom there is a GS-manipulation.  Let $p\in S^+$ be the candidate such that $p\sqsupset_V q$ for all $q\in S^+$ and suppose  voter i who can manipulate in favour of $p$ with $v^*_i=v_i[\text{top}(v_i);p]$. Suppose, first, that there is another voter $k$ who can manipulate in favour of $p$ with $v^*_k=v_k[\text{top}(v_k);p]$. Then it is easy to check that $((V_{-i}, v^*_i)_{-k},v^*_k)$ is a Nash equilibrium with winner $p$. If this second manipulator in favour of $p$ does not exist, no GS-manipulator (existing at $V$) can change the result at $(V_{-i}, v^*_i)$ by his actions due to the choice of $p$. However  a countermanipulator can possibly change the result countermanipulating in favour of $q\in S\setminus S^+$ (exactly such situation has occurred in Example~\ref{ex1}). Note that his top preference is not $p$ thus he cannot make the score of $p$ lower. Let $q\in S\setminus S^+$ be maximal with respect to $\sqsupset_V$ for which some voter, say voter $j$ with $v^*_j$, can countermanipulate. Then $((V_{-i}, v^*_i)_{-j},v^*_j)$ is a NE.
\end{proof}
}
\subsection{Manipulation strategies for $k$-Approval. First observations.}

To get more insight into GS-games for $k\ge 2$ we have to understand possible manipulation strategies of players.
%
Note, first, that under $k$-Approval any GS-manipulation of voter $i$ is equivalent to a vote of the form $v_i[X; Y]$, where $X\subseteq\tp_k(v_i)$, $Y\subseteq C\setminus\tp_k(v_i)$.
There are two types of GS-manipulators for $k$-Approval voting: 
those who rank the current winner~$w$ in top $k$
positions, and those who do not. We give a more precise description in the following lemma.

\begin{lemma}
\label{mantypes}
Let the voting rule be $k$-Approval for $k\ge 1$. Let $V$ be a profile and $w$ be the winner at $V$. Let also $x$ be an alternative, other than $w$.
Then any manipulation in favour of $x$ at $V$ falls under one and only one of the following two categories: 
\begin{description}
\item[Type 1] A voter $i$ increases the score of $x$ by 1 without decreasing the score of  $w$. In this case both $w$ and $x$ are not approved by the manipulator with $x\succ_i w$, the manipulator moves $x$ to the first $k$ positions leaving $w$ not approved. We refer to $i$ as a {\em promoter} of~$x$. 
\item[Type 2] A voter $i$ reduces the score of $w$ and possibly the scores of some other alternatives by 1 without increasing the score of $x$. In this case both $w$ and $x$ are approved by the manipulator with $x\succ_i w$, the manipulator removes $w$ from his top $k$ positions leaving $x$ there. We refer to $i$ as a {\em demoter} of~$w$. 
\end{description}
Moreover, for $k=1$ only manipulations of type~1 are possible.
\end{lemma}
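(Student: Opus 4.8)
The plan is to prove Lemma~\ref{mantypes} by a direct analysis of how a single voter's $k$-Approval ballot can be altered, using the observation already recorded just above the lemma: any GS-manipulation of voter $i$ under $k$-Approval is equivalent to a vote $v_i[X;Y]$ with $X\subseteq\tp_k(v_i)$ and $Y\subseteq C\setminus\tp_k(v_i)$. Since changing a ballot $v_i$ only alters the score of a candidate $c$ according to whether $c$ moves into or out of $\tp_k(v_i)$, the effect of a manipulation is fully captured by the set $\tp_k(v_i)$ before and after. First I would fix notation: let $w$ be the winner at $V$, let $x\neq w$ be the target, and let $v_i^*=v_i[X;Y]$ be a manipulation in favour of $x$, so that $\calR(V_{-i},v_i^*)=x$ and $i$ prefers $x$ to $w$, i.e.\ $x\succ_i w$.

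The key step is to argue that $x$ becoming the winner forces a very specific minimal change. Because $i$ is pivotal for the outcome switching from $w$ to $x$, the manipulation must either raise $\scr_k(x,\cdot)$ or lower $\scr_k(w,\cdot)$ (or both). I would then show that a ``sound'' / minimal manipulation does exactly one of these, and split into the two cases according to whether $w\in\tp_k(v_i)$ in the sincere ballot. If $w\notin\tp_k(v_i)$, then $i$ was not approving $w$, so $i$ cannot lower $w$'s score; the only way to make $x$ win is to bring $x$ into the top $k$, raising $\scr_k(x,\cdot)$ by $1$ while leaving $\scr_k(w,\cdot)$ untouched — this is Type~1, and here necessarily $x\notin\tp_k(v_i)$ as well (otherwise $x$ already had the point). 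If instead $w\in\tp_k(v_i)$, then to help $x$ overtake $w$ the voter removes $w$ from the top $k$, dropping $\scr_k(w,\cdot)$ by $1$; since $x\succ_i w$ and $w$ was approved, $x$ is also in $\tp_k(v_i)$ and stays there, so $\scr_k(x,\cdot)$ is unchanged — this is Type~2. In each branch I would verify the stated side conditions ($x\succ_i w$ in both, $x$ unapproved in Type~1, $x$ approved in Type~2) and note that the two cases are mutually exclusive precisely because they are distinguished by whether $w$ lies in $\tp_k(v_i)$.

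The main obstacle is handling manipulations that appear to do \emph{both} things at once — simultaneously promoting $x$ and demoting $w$ — and manipulations that swap several candidates, shuffling multiple scores. To rule these out as genuinely distinct ``types'' I would invoke the minimality implicit in a GS-manipulation together with the fact established in Definition~\ref{GS-manipulation} that the voter targets his most preferred achievable winner: any change beyond the single swap needed to make $x$ win can be pared back without affecting the outcome, so every manipulation in favour of $x$ is equivalent (with respect to $k$-Approval, i.e.\ induces the same $\tp_k$) to one performing exactly one elementary swap. This reduces the general $v_i[X;Y]$ to a single pair $x\leftrightarrow w$ or $x\leftrightarrow(\text{some candidate})$, after which the dichotomy above is exhaustive. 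The classification is then complete, and the final clause is immediate: for $k=1$ a voter approves exactly one candidate, namely $\tp(v_i)$, and the winner $w\neq\tp(v_i)$ is never approved (as $i$ is a manipulator), so $w\notin\tp_1(v_i)$ always holds and only Type~1 can occur.
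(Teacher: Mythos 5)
Your first two paragraphs are correct and are essentially the paper's own proof. Both arguments hinge on the same observation: since a manipulation in favour of $x$ requires $x\succ_i w$, the approval statuses of $x$ and $w$ in the sincere ballot are linked. The paper splits on whether $x$'s score can be increased (i.e.\ whether $x$ is unapproved), you split on whether $w$ is approved; these are complementary formulations of the same dichotomy, because $x\succ_i w$ rules out the mixed configuration ($w$ approved, $x$ unapproved), and each case then forces exactly one of the two score-changing actions. Your treatment of the $k=1$ clause also matches the paper's (Type~2 needs both $w$ and $x$ approved, impossible when the approval set is a singleton).

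The problem is your third paragraph, which is both unnecessary and false as stated. You claim that every GS-manipulation in favour of $x$ is equivalent --- i.e.\ induces the same $\tp_k$ --- to one performing a single elementary swap. That is not true: Definition~\ref{GS-manipulation} imposes no minimality, and under $2$-Approval a manipulation such as $v_1[xy;aw]$ (which the paper itself uses later to realise diagram (i)) moves \emph{two} candidates into the top $k$; no single swap yields that top-$2$ set. Moreover the lemma does not need such a reduction, since its two types only constrain the scores of $x$ and of the winner $w$ --- Type~2 explicitly permits lowering ``the scores of some other alternatives,'' and Type~1 says nothing about candidates other than $x$ and $w$. Fortunately, nothing in your case analysis actually uses single-swap-ness: in your Case A ($w\notin\tp_k(v_i)$) an arbitrary manipulation, however many candidates it shuffles, cannot change $w$'s score and can only make $x$ win by raising $x$'s score, forcing $x\notin\tp_k(v_i)$; in your Case B ($w\in\tp_k(v_i)$, hence $x\in\tp_k(v_i)$ by $x\succ_i w$) an arbitrary manipulation cannot raise $x$'s score and so must lower $w$'s while keeping $x$ approved. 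So the correct fix is simply to delete the reduction and let the two cases apply verbatim to arbitrary manipulations; with that excision your argument coincides with the paper's.
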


\begin{proof}
If voter $i$ can increase the score of an alternative $x$ in favour of which he manipulates, then it was not approved by this voter originally. He, however, must rank it higher than $w$ (otherwise this is not a GS-manipulation). Thus, $w$ was not approved either and voter $i$ cannot decrease its score. 
The second possibility is that the voter cannot increase the score of the alternative $x$ in favour of which he manipulates since it is already approved by him. This means that he is left with reducing the scores of some of its competitors including the current winner~$w$. This means $w$ is also in the approval set with $x\succ_i w$. This type of manipulation cannot happen for $k=1$ since both $w$ and $x$ must be in the approval set.
\end{proof}

\subsection{2-Approval GS-games with unrestricted manipulations}

We will start with a seemingly discouraging example which shows that for 2-Approval voting GS-games no NE may exist (but watch how unnatural the strategies are).

\begin{example} 
\label{ex:no_eq-2-Approval} 
\normalfont Consider three players, $v_1$, $v_2$, $v_3$, with the following preferences under the 2-Approval voting rule. The tie-breaking rule is $n> a >m > x > b > c> d > e > f$. 
Let
\begin{equation*}
V=(v_1,v_2,v_3)=(ab|cdefnmx,\ cd|nmxbaef,\ ef|bmxnacd).
\end{equation*}
%
%
Under this profile $a$ wins. Voter $v_1$ is not a manipulator as his most preferred candidate, $a$, wins. Both $v_2$ and $v_3$ can manipulate in favour of $n$ and $b$, respectively. 
We define the strategy sets $A_2=\{s_2, i_{2}, i_{2}'\}$ and $A_3=\{s_3, i_{3}, i_{3}'\}$, where $s_2$, $s_3$ are the sincere strategies of $v_2$ and $v_3$ respectively. Here $i_2$ and $i_2'$ are  manipulations of voter 2 where he swaps $c$ and $d$ with $n$ and $m$, and $n$ and $x$, respectively, i.e.,  
\[
i_2=v_2[cd;nm],\qquad \text{and}\qquad i'_2=v_2[cd;nx].
\]
Similarly, $i_3$ and $i_3'$ are manipulations of voter 3 where he swaps $e$ and $f$ with $b$ and $m$ and $b$ and $x$, respectively, submitting 
\[
i_3=v_3[ef;bm],\qquad \text{and}\qquad i'_3=v_3[ef;bx].
\]
To see that this game has no Nash equilibrium we have to consider all 9 strategy profiles that can be realised. These are summarised in Table~\ref{StrategyProfiles} where we give the winner in the given profile, which voter has an incentive to change their strategy and the change in strategy that would be favourable for that voter.
\begin{table}[H]
  \centering
    \begin{tabular}{cccc}
    \toprule
    Profile & Winner  & Voter & Change in Strategy \\
    \midrule
    ($s_2$, $s_3$) & $a$   & $3$ & $s_3$ to $i_3$ \\
    ($s_2$, $i_3$) & $b$   & $2$ & $s_2$ to $i_2$ \\
    ($s_2$, $i_3'$) & $b$   & $2$ & $s_2$ to $i_2'$ \\
    ($i_2$, $s_3$) & $n$   & $3$ & $s_3$ to $i_3'$ \\
    ($i_2$, $i_3$) & $m$   & $3$ & $i_3$ to $i_3'$ \\
    ($i_2$, $i_3'$) & $b$   & $2$ & $i_2$ to $i_2'$ \\
    ($i_2'$, $s_3$) & $n$   & $3$ & $s_3$ to $i_3$ \\
    ($i_2'$, $i_3$) & $b$   & $2$ & $i_2'$ to $i_2$ \\
    ($i_2'$, $i_3'$) & $x$   & $3$ & $i_3'$ to $i_3$ \\
    \bottomrule
    \end{tabular}%
   \caption{Profitable Deviations at Each of the 9 Strategy Profiles}
  \label{StrategyProfiles}%
\end{table}%
A graphical representation of this is in Figure~\ref{fig:daniel_graph}. As we can see, instead of a NE, we have an attractor in the form of a 4-cycle.
\begin{figure}
\begin{center}
\resizebox{8cm}{!}{\includegraphics{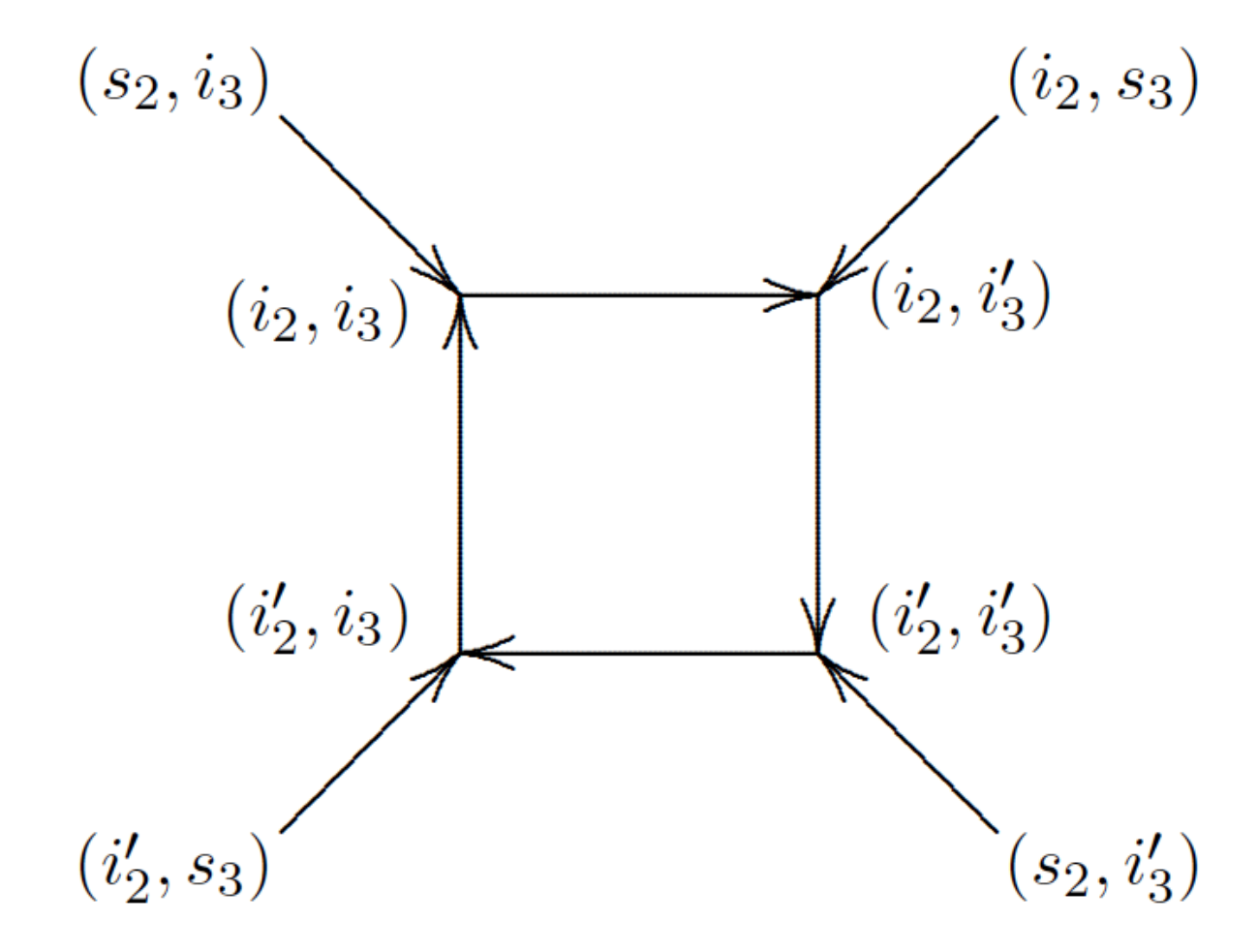}}
\end{center}
\vspace{-5mm}
\caption{Diagram for the Game in Example~\ref{ex:no_eq-2-Approval}}\label{fig:daniel_graph}
\end{figure}

As can be seen, in every profile it is favourable for at least one of the manipulators to change their strategy, hence no Nash equilibrium exists.
\end{example}

This is a no Nash equilibrium game with the minimal number of voters (three), since for two voters, as we will see, every game has a Nash equilibrium. To show this we need the following lemma. 

\begin{lemma}
\label{same_tops}
In any GS-game under the 2-Approval voting rule all Type 2 manipulators must have the same two top preferences and in the same order. In particular, they manipulate in favour of the same candidate.
\end{lemma}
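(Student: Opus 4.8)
The plan is to reduce the claim to a statement about a single coordinate using Lemma~\ref{mantypes}, and then to settle that coordinate by a short score count. First I would invoke Lemma~\ref{mantypes} with $k=2$: a Type~2 manipulator $i$ in favour of a candidate $x$ approves both the winner $w$ and $x$, with $x\succ_i w$. Since for $2$-Approval a voter's approval set is exactly $\tp_2(v_i)$, this forces $\tp_2(v_i)=\{x,w\}$ with $x$ first and $w$ second, so that $x=\tp(v_i)$. Thus every Type~2 manipulator already shares the same second preference, namely the winner $w$, and the lemma reduces to showing that any two Type~2 manipulators have the same \emph{top} candidate.

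Next I would take two Type~2 manipulators $i$ and $j$ with targets $x=\tp(v_i)$ and $y=\tp(v_j)$ and assume, for contradiction, that $x\neq y$. Writing $t=\scr_2(w,V)$, $s_x=\scr_2(x,V)$, $s_y=\scr_2(y,V)$, the manipulation of $i$ replaces his approval set $\{x,w\}$ by $\{x,z_i\}$ for some $z_i\notin\{x,w\}$; this keeps the score of $x$ at $s_x$, lowers the score of $w$ to $t-1$, raises the score of $z_i$ by one, and leaves all other scores fixed. Since this manipulation makes $x$ the winner, $x$ in particular beats $y$, and the only way $y$'s score can move is a $+1$ when $y=z_i$, so
\[
s_x > s_y + [\,y=z_i\,]\quad\text{or}\quad\bigl(s_x = s_y + [\,y=z_i\,]\ \text{and}\ x>y\bigr),
\]
where $[\,\cdot\,]$ is $1$ when the enclosed statement holds and $0$ otherwise. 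By the symmetric analysis of $j$'s manipulation, which makes $y$ the winner,
\[
s_y > s_x + [\,x=z_j\,]\quad\text{or}\quad\bigl(s_y = s_x + [\,x=z_j\,]\ \text{and}\ y>x\bigr).
\]

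Finally I would combine the two conditions. Each first condition gives $s_x\ge s_y+[\,y=z_i\,]\ge s_y$ and $s_y\ge s_x+[\,x=z_j\,]\ge s_x$, hence $s_x=s_y$ and both indicators vanish. With equal scores and no bump on either competitor, the only remaining possibility in each condition is the tie-break clause, which forces simultaneously $x>y$ and $y>x$, a contradiction. Therefore $x=y$, which is exactly the reduced claim.

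The step I expect to need the most care is the bookkeeping around the bump candidate $z_i$ (respectively $z_j$): a demoter must move \emph{some} candidate into the slot vacated by $w$, and a priori that candidate could be $y$ (respectively $x$) and so raise a rival's score and break a naive comparison. The device that keeps the argument clean is to reason only with the manipulations that are \emph{assumed} to exist---so that each target genuinely wins in its own profile---and to carry the possible extra point explicitly through the indicator terms; it is precisely this bookkeeping that makes the two inequalities collapse to the contradictory pair of tie-break comparisons.
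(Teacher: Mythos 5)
Your proof is correct and follows essentially the same route as the paper's: both reduce to the observation that a Type~2 manipulator under $2$-Approval must rank his target first and $w$ second, then assume two manipulators with distinct tops and derive a contradiction from the fact that each manipulation makes its own target beat the other's. The only difference is presentational: the paper compresses the score-plus-tie-break comparison into its relation $\sqsupset_V$ and implicitly uses that the promoted ``bump'' candidate can only help the rival, whereas you carry that bookkeeping explicitly via the indicator terms---a slightly more careful rendering of the same argument.
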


\begin{proof}
First note that all Type 2 manipulators approve the winner $w$ of the sincere profile $V$ but place her in the second position. Also, if a voter is a Type 2 manipulator, its other approved candidate must either be on the same score as $w$ and be lower in tie-breaking order or be on the score one less than $w$ and be higher in the tie-breaking order.
Suppose there are two Type 2 manipulators with different top preferences, call them $v_1$ and $v_2$ with $a_1=\text{top}(v_1)\ne \text{top}(v_2)=a_2$, i.e., $v_1=a_1w\ldots$ and $v_2=a_2w\ldots$. 

When $v_1$ manipulates, demoting $w$ and promoting some other candidate, $a_1$ wins. This means, in particular, $a_1\sqsupset_V a_2$. But similarly, we get $a_2\sqsupset_V a_1$.
This is a contradiction, showing that all Type 2 manipulators must have the same top two preferences (and in the same order).
\end{proof}

\begin{theorem}
Every 2-Approval GS-game with two voters has a Nash equilibrium in pure strategies.
\end{theorem}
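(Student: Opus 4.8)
The plan is to argue by the number of GS-manipulators in the two-voter profile $V$. If there are no GS-manipulators, or exactly one, the claim is immediate: a game in which at most one player has a nontrivial action set always has a Nash equilibrium, since that single player merely selects a best action while the other voter cannot deviate. The same remark disposes of any two-manipulator game in which one player has no GS-manipulation in his action set, as it collapses to a one-player problem. Hence the only substantive case is when both voters are GS-manipulators and each has at least one manipulation available.

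Before splitting into cases I would record two facts. First, by Definition~\ref{GS-manipulation} all GS-manipulations of a given voter elect the same candidate when the opponent votes sincerely; write $p_1,p_2$ for these targets. This is the crucial constraint, since it forces each player's only admissible non-sincere moves to be GS-manipulations aimed at his single target. Second, with two voters under $2$-Approval every candidate scores $0$, $1$ or $2$, and the winner $w$ must satisfy $\scr_2(w,V)\in\{1,2\}$ (score $0$ is impossible, as four approval points are distributed). I would then split on $\scr_2(w,V)$.

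If $\scr_2(w,V)=2$, both voters approve $w$, so by Lemma~\ref{mantypes} neither can be a Type~1 promoter and both are Type~2 demoters; by Lemma~\ref{same_tops} they share the same top two candidates $a,w$ and $p_1=p_2=a$. Any single demotion of $w$ makes $a$ the unique top scorer, so the profile in which one player demotes $w$ and the other stays sincere elects $a$. Since $a$ is the top candidate of both players, neither can improve, and this profile is a Nash equilibrium.

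If $\scr_2(w,V)=1$, exactly one voter $W$ approves $w$ and the other, $O$, does not; then $W$ is a Type~2 demoter favouring his other top candidate $a$, while $O$ is a Type~1 promoter favouring some $b$. No candidate may score $2$ (else it beats $w$), so the approval sets are disjoint and $a,w$ together with $O$'s two approved candidates are four distinct score-$1$ candidates, with $w$ the tie-break maximum; in particular $w>a$. If $b=a$, every profile in which at least one player keeps or promotes $a$ elects $a$, the best outcome for both, giving a Nash equilibrium as before. The remaining case $b\neq a$ is the heart of the argument: $b$ then has sincere score $0$, and for $O$'s promotion of $b$ to $w$ in, we must have $b>w$ in the tie-break order, hence $b>w>a$. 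I would then inspect the profile where only $O$ manipulates: it elects $b=p_O$, so $O$ cannot improve. The main obstacle is to rule out $W$ dislodging $b$, and this is exactly where the two recorded facts pay off: every move available to $W$ is a GS-manipulation electing $a$, so it keeps $a$, discards $w$, and adds only a candidate $z$ with $a>z<w<b$; therefore when both manipulate the winner is still $b$, leaving $W$ indifferent. Thus this profile is a Nash equilibrium, completing the case analysis.
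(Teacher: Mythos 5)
Your overall route is the same as the paper's: dispose of the at-most-one-manipulator case, rule out two Type~1 promoters (the winner must be approved by somebody), settle the both-Type~2 case via Lemma~\ref{same_tops}, and in the mixed case propose the profile where the promoter $O$ manipulates and the demoter $W$ stays sincere as the equilibrium. Everything up to and including the sub-case $b=a$ is sound. The gap is the last step of the sub-case $b\neq a$: from the (correct) fact that every action of $W$ is a GS-manipulation electing $a$, you conclude that at $(i_W,i_O)$ every positively scored candidate has score $1$, so the tie-break hands the win to $b$. But $i_O$ approves a second candidate $y$ besides $b$, and nothing in Definition~\ref{GS-manipulation} forces $y$ to differ from the candidate $z$ that $i_W$ promotes (soundness is \emph{not} assumed in this theorem). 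If $y=z$, then $z$ finishes with score $2$ and wins; and since $W$'s ranking of $z$ versus $b$ is completely unconstrained (both sit below $w$ in $v_W$), $W$'s deviation from $(s_W,i_O)$ to $(i_W,i_O)$ can be strictly profitable, so your claimed equilibrium can fail.

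Moreover, the gap cannot be patched under the stated hypotheses, because the statement is false when action sets may be proper subsets of the GS-manipulations. Let $C=\{a,w,b,y_1,y_2,c_1,c_2\}$ with tie-breaking $b>w>a>y_1>y_2>c_1>c_2$, and let $v_1=a\,w\,y_1\,y_2\,b\,c_1\,c_2$, $v_2=c_1\,c_2\,b\,a\,w\,y_1\,y_2$. The sincere winner is $w$; voter~$1$ is a Type~2 manipulator with target $a$, voter~$2$ a Type~1 manipulator with target $b$, and all of $v_1[w;y_1]$, $v_1[w;y_2]$, $v_2[c_1c_2;b\,y_1]$, $v_2[c_1c_2;b\,y_2]$ are GS-manipulations. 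Take $A_1=\{v_1,\,v_1[w;y_1],\,v_1[w;y_2]\}$ and $A_2=\{v_2,\,v_2[c_1c_2;b\,y_1],\,v_2[c_1c_2;b\,y_2]\}$. If the two insincere votes played share $y_j$, then $y_j$ scores $2$ and wins; otherwise the winner is $w$, $a$ or $b$ exactly as in your analysis. Since $y_1,y_2\succ_1 b$ while $b\succ_2 a\succ_2 y_1,y_2$, every one of the nine action profiles admits a profitable deviation; the four all-insincere profiles form a best-response $4$-cycle, the same attractor as in the paper's three-voter Example~\ref{ex:no_eq-2-Approval}, so this two-voter GS-game has no pure Nash equilibrium. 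For comparison: the paper's own proof has precisely the same flaw, asserting without justification that after the Type~1 manipulator moves, ``$v_1$'s manipulation will not change the outcome''; your more explicit write-up simply makes the hidden step visible. The theorem (and your argument, essentially verbatim) does hold under either of two extra assumptions: each player has a single manipulation, which reduces to the paper's $2$-by-$2$ proposition, or the Soundness Assumption, under which $O$ has a manipulation pairing $b$ with one of his own sincerely approved candidates $c_1,c_2$, which no action of $W$ can lift to score $2$ (cf.\ Theorem~\ref{thm:ne-2app}).
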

\begin{proof}
If only one voter is a manipulator then there is a trivial Nash equilibrium (the manipulator plays insincerely). Hence assume both players, $v_1$ and $v_2$, are manipulators. Due to Lemma~\ref{mantypes}, both voters cannot be  manipulators of Type 1 since the winner of the sincere profile $V$ must be in $\text{top}_2(v_1)\cup \text{top}_2(v_2)$. If both are Type 2 manipulators, by Lemma~\ref{same_tops} their manipulations are in favour of the same candidate so when either plays insincerely a Nash equilibrium is achieved.

So assume $v_1$ is a Type 2 manipulator and  $v_2$ is a Type 1. That is $v_1$'s second best candidate wins at the sincere profile and their top candidate comes second. But $v_2$ is a Type 1 manipulator so if he manipulates in favour of candidate $x$, then $x$ must beat after that both of $v_1$'s top two candidates (or be $v_1$'s) so $v_1$'s manipulation will not change the outcome, giving a Nash equilibrium when just $v_2$ votes insincerely. 
\end{proof}

One of the notable characteristics of the Example~\ref{ex:no_eq-2-Approval}  is that the manipulators have more than one manipulation in their strategy sets (in this case they have two each). We would hypothesise that if manipulators are restricted to no more than one manipulation each (not an unreasonable assumption as to consider more than one manipulation would not be expected of most voters) then a Nash equilibrium should exist in any such voting manipulation game. This has already been shown to be true for $2$-by-$2$ GS-games. 

\subsection{2-Approval GS-games under sound manipulations}

{\color{black}

We saw in Example~\ref{ex:no_eq-2-Approval}  that if voters choose an `irrational' sets of strategies, then NE may not exist. Indeed, both voters can realise that $i_2''=v^*_2[d;n]$ and $i_3''=v^*_3[f;b]$ are much better strategies to use. For both players,  promoting not only the candidate in favour of whom they are manipulating but also $m$  is dangerous as she can inadvertently win (and she does).

Let us now state some additional assumptions that will be used in the following sections.

\begin{definition}
Under $k$-Approval a Type 1 GS-manipulation $v_i[Y,X]$ of voter $i$  in favour of $x\in X$, where $Y\subseteq \text{top}_k(v_i)$ and $X\subseteq C\setminus \text{top}_k(v_i)$    is called \emph{sound} if all the alternatives  which are moved up together with $x$ are preferred to $x$, i.e., for all $x'\in X$, different from $x$, we have that $x'>_i x$.
\end{definition}
}

The reason for this definition is as follows. A voter is manipulating in favour of $x$ which is the highest alternative which he can get manipulating alone. However it may be possible if two or more voters manipulating in unison may secure a better outcome. Hence a voter may wish, just in case, to increase the score of an alternative different from $x$, but increasing scores of alternatives which are below $x$ may end up in promoting them as the new winners. We therefore formulate the following assumption:

\begin{assumption}[Soundness assumption] Let $V$ be a profile and $G$ be a GS-game for $V$ under $k$-Approval voting rule. If voter $i$ has a manipulation strategy in $A_i$ in favour of an alternative $x$, then $A_i$ also contains a sound manipulation strategy in favour of $x$.
\end{assumption}


{\color{black}

\begin{theorem}\label{thm:ne-2app}
Any GS-game for 2-Approval voting rule, 
whose strategy sets satisfy the Soundness Assumption, has a Nash equilibrium. 
\end{theorem}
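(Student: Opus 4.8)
The plan is to single out, among all candidates that some player can make win, one ``best reachable'' candidate and then exhibit a profile in which it wins as a Nash equilibrium, imitating the structure of the Plurality argument (Theorem~\ref{thm:ne-plurality}) but with the score counting adapted to $2$-Approval. First I would set up the reduction. Let $w$ be the winner at $V$ and $t=\scr_2(w,V)$. By Definition~\ref{GS-manipulation} every GS-manipulation of a player $i$ yields the same winner when $i$ deviates alone from $V$, namely his most preferred reachable candidate, so each manipulator $i$ has a single \emph{target} $p_i$ and all his actions in $A_i$ are in favour of $p_i$. By Lemma~\ref{mantypes} each action is Type~1 (promoter) or Type~2 (demoter), and by Lemma~\ref{same_tops} all Type~2 players share the top pair $(a,w)$ and target the same $a$. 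A score count as in the Plurality proof shows that every target lies in the set $S$ of candidates that are either tied with $w$ but lose the tie-break, or one point behind $w$ but beat it on the tie-break.

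Next I would order the targets by the score they \emph{reach once promoted}, not by their score at $V$. Concretely, assign to a target $x$ a reachable peak $\rho(x)=\scr_2(x,V)+1$ if $x$ is raised by a Type~1 action, and $\rho(x)=\scr_2(x,V)$ if $x=a$ is reached only by Type~2 demotions; ordering simply by $\sqsupset_V$ fails precisely because a Type~1 target at score $t$ overtakes a Type~2 target at the same score. Let $p$ be the target with the largest $\rho$, ties broken by the global tie-breaking order, let $i$ be a player with target $p$, and let $i$ play a sound manipulation $v_i^\ast$ in favour of $p$ (one exists in $A_i$ by the Soundness Assumption) while every other player stays sincere. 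I claim this profile $V^\ast$, whose winner is $p$, is a Nash equilibrium.

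The verification is a case analysis over a deviator $j$, using three tools. (a)~Maximality of $\rho(p)$ caps the score any target $p_j$ can reach at $\rho(p)$, so a single promotion of $p_j$ at most ties $p$ and then loses the tie-break to it. (b)~If $j$'s deviation demotes the incumbent $p$, then $p\in\tp_2(v_j)$, whence $p\succ_j p_j$ and the deviation is unprofitable. (c)~A Type~2 deviator can, by Lemma~\ref{same_tops}, only help $a$, whose score is already dominated by $\rho(p)$. Soundness enters exactly here: it forces the collateral candidates a player moves up to be preferred to his target, which is what rules out the escape of Example~\ref{ex:no_eq-2-Approval}, where a voter inadvertently elevated a low candidate and triggered a best-response cycle.

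The step I expect to be the main obstacle is controlling collateral promotions from a \emph{non-minimal} sound manipulation: if $A_i$ contains only a sound action that raises, alongside $p$, a second candidate $c'\succ_i p$, then $c'$ may already sit near $\rho(p)$ in $V^\ast$, and a second player promoting the same $c'$ could push it past $p$. To handle this I would first show that such a $c'$ cannot be a target reachable to $\rho(p)$ by one promotion, for otherwise it would itself be achievable by $i$ alone and, being preferred to $p$, would have been $i$'s target, contradicting the choice of $p_i$; this already removes the dangerous alignments. In the residual situation where two players can raise the same candidate I would replace the single-manipulator profile by one in which a second promoter of $p$ supplies the extra margin, exactly as the Plurality proof uses a second manipulator. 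The delicate bookkeeping is to show these repairs terminate, for which I would fall back on an improvement-path argument: once all manipulations are sound, any profitable deviation strictly increases the reachable peak of the current winner or its tie-break rank, a quantity bounded above, so best-response dynamics cannot cycle. This is the precise sense in which the Soundness Assumption restores the equilibrium lost in Example~\ref{ex:no_eq-2-Approval}.
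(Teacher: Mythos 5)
Your overall architecture is the same as the paper's: pick the strongest reachable target, have one voter play a sound manipulation in its favour while all others stay sincere, and check deviations using Lemma~\ref{mantypes} to split deviators into promoters and demoters and Lemma~\ref{same_tops} to collapse the demoter case. Your $\rho$-ordering is an equivalent repackaging of the paper's choice (the paper takes the $\sqsupset_V$-maximal target \emph{among Type 1 targets} whenever Type 1 manipulations exist; this always coincides with your $\rho$-maximum, since a Type 2 target never reaches more than $\scr_2(w,V)$ while every Type 1 target reaches at least that after promotion). Your observation (b) --- that a deviation demoting the incumbent $p$ requires $p\in\tp_2(v_j)$ and is therefore unprofitable --- is correct and is a detail the paper's own write-up glosses over. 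In the clean sub-case, where some voter has a sound manipulation promoting \emph{only} $p$, your argument goes through and matches the paper's.

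The genuine gap is concentrated in one false step: ``this already removes the dangerous alignments.'' Your argument there shows only that the collateral candidate $c'$ cannot win from a \emph{single} promotion (otherwise it would have been $i$'s target). But the danger is a \emph{double} promotion: $i$'s own sound manipulation has already raised $c'$ by one point in $V^\ast$, and a deviator $j$ whose sound manipulation in favour of $p_j$ carries the same collateral $c'$ (with $c'\succ_j p_j$) raises it a second time. Concretely, let $\scr_2(p,V)=\scr_2(c',V)=\scr_2(p_j,V)=t-1$ with tie-breaking order $p>p_j>w>c'$, where $t=\scr_2(w,V)$. Then $v_i^\ast=v_i[xy;pc']$ is a legitimate sound manipulation in favour of $p$ (since $c'$ promoted once still loses the tie-break to $w$), and $v_j^\ast=v_j[z_1z_2;p_jc']$ is a legitimate sound manipulation in favour of $p_j$; yet after $j$ deviates at $V^\ast$, candidate $c'$ sits at $t+1$ and beats $p$ at $t$. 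Moreover $p$ is achievable by $j$ alone, so $p_j\succ_j p$ by the definition of $j$'s target, whence $c'\succ_j p_j\succ_j p$ and the deviation is profitable: $V^\ast$ is \emph{not} a Nash equilibrium (the equilibrium, as in the paper's second sub-case, is a profile where $j$'s manipulation is also played). Your two proposed repairs do not close this: a second promoter of $p$ need not exist, and your potential function (``any profitable deviation strictly increases the reachable peak of the current winner or its tie-break rank'') is asserted, not proved --- indeed the winner's score can strictly \emph{decrease} along a profitable deviation, e.g.\ when a demoter lowers $w$ and a candidate at score $t-1$ takes over, so any such monotonicity claim needs real work. In fairness, this is exactly the point where the paper's own proof also turns informal (``Eventually a Nash equilibrium will be reached''), so you have located the true difficulty; the problem is that your write-up claims to have dispatched it when it has not.
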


\begin{proof}
Let $V$ be a profile and $w$ be a winner for 2-Approval voting at $V$. Then $w$ has a maximal score, say $s$. Let $Q\subset C$ be the set of candidates in favour of whom a GS-manipulation at $V$ exists. Then all candidates from $Q$ have scores $s$ or $s-1$. Suppose manipulations of Type 1 exist at $V$ and $q$ be the highest candidate from $Q$ relative to $\sqsupset_V$ in favour of whom manipulation of Type 1 exists. Due to the Soundness assumption, every voter who can manipulate in favour of $q$ has a sound manipulation strategy in favour of her. Suppose one of them, say voter $i$, has the manipulation  $v_i^*=v_i[y;q]$, in which he moves up only $q$.  Then we claim that $V^1=(V_{-i},v_i^*)$
is a Nash equilibrium.  At $V^1$ we have $q \sqsupset_{V^1} w  \sqsupset_{V^1} \text{top}(v_\ell)$ for any voter $\ell$.
Indeed, no manipulation of Type 2 will change the result at $V^1$. Such a manipulation for voter $\ell$ has the form $v^*_\ell=v_\ell[w;a]$, where $a$ loses to $\text{top}(v_\ell)$ in $(V_{-\ell},v^*_\ell)$ (in favour of whom he manipulates), and, hence in $V^1$.  And $\text{top}(v_\ell)$ loses to $q$ in $V^1$.
Any manipulation of Type 1 in favour of $p\in Q$ will not change the result either since $q \sqsupset_V p$, hence $p$ cannot overtake $q$ in $V^1$.

Suppose now that any manipulator of Type 1 in favour of $q$ in his sound manipulation also promotes another candidate whom he ranks higher than $q$, say his sound manipulation is  $v_i^*=v_i[xy;qr]$ with $r\succ_i q$. Then, as we showed before, no other manipulation can lead to winning of $p\in Q$ different from $q$. However, it may lead to winning of $r$ in case voter $j$ manipulates with  $v_j^*=v_j[zt;pr]$ with $r\succ_j p$. In such a case $r\sqsupset_{V^2} q$, where  $V^2=(V^1_{-j},v_j^*)$. Then $V^2$ may be a Nash equilibrium since both voters $i$ and $j$ will be satisfied. The only case, when it is not a Nash equilibrium is when another voter, say $k$, can also manipulate in favour of $q$, say with $v_k^*=v_k[uv;qs]$, where $s\ne r$. Then someone else, say voter $m$, may be in position to vote  $v_m^*=v_m[ef;pr]$ making $r$ winner again. After the first two manipulations, only $q$ and $r$ can stay in contention (all other candidates will have not enough points for this. All who manipulated would not want to revert to their sincere votes. Eventually a Nash equilibrium will be reached.

It remains to consider the case when only Type 2 manipulators exist. In this case by Lemma~\ref{same_tops} they all manipulate in favour of the same candidate and, if one of them manipulates, all others are happy.
\end{proof}

As Example~\ref{ex:no_eq-2-Approval}  showed, the Soundness Assumption is necessary for the existence of Nash equilibria. 
}  

\subsection{$3$-Approval}

A stronger rationality assumption that we will consider is that of restricting the set of manipulation strategies to those that take a minimal number of changes.


\begin{definition}
A manipulation of Type 1 in favour of $x$ under $k$-Approval is {\em minimal} if $x$ is the only alternative which is moved up and to the $k$th position while the alternative which formerly occupied $k$th position becomes not approved.
\end{definition}
 In particular, a Type 1 minimal manipulation it is always sound. 
 
 \begin{definition}
 A manipulation of Type 2 is called {\em minimal} if the smallest number, say $\ell$, of alternatives are moved down while the best $\ell$ not previously approved alternatives are moved up. 
\end{definition}
Observe that in a  GS-game there may be several minimal manipulation available to the same player. We can now formulate the following assumption. 

\begin{assumption}[Minimality assumption (MA)] 
Let $V$ be a profile and $G$ be a GS-game for $V$. Then $G$ satisfies the Minimality Assumption if for every $j=1,2,\ldots, n$ all GS-manipulations in the strategy set $A_j$ are minimal.
\end{assumption}

To secure a Nash equilibrium for 2-Approval GS-games we had to assume Soundness assumption. A similar result holds for $3$-Approval, though under the stronger Minimality Assumption. 

We start with the following lemma.

\begin{lemma}
\label{at_most__two}
In any GS-game under the 3-Approval voting rule the set of candidates in favour of whom Type 2 manipulators can manipulate has cardinality at most~2.
\end{lemma}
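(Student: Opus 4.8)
The plan is to prove something slightly stronger: that \emph{every} candidate in favour of whom a Type~2 manipulation exists lies among the top two candidates below $w$ in the ordering $\sqsupset_V$, so there can be at most two of them. Fix the profile $V$, let $w$ be the 3-Approval winner with score $s=\scr_3(w,V)$, and list all candidates in decreasing $\sqsupset_V$-order as $w=c_{(1)}\sqsupset_V c_{(2)}\sqsupset_V c_{(3)}\sqsupset_V\cdots$ (extending the relation $\sqsupset_V$ to compare 3-Approval scores rather than Plurality scores). I want to show that if $x=c_{(j)}$ is the target of some Type~2 manipulation, then $j\le 3$.

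First I would invoke Lemma~\ref{mantypes}: a Type~2 manipulator $i$ keeps his target $x$ approved while demoting $w$, so $x\in\tp_3(v_i)$ and the deviation leaves the score of $x$ unchanged. Writing $V^*=(V_{-i},v_i^*)$ for the manipulated profile and $\sigma=\scr_3(x,V)=\scr_3(x,V^*)$, the assumption that $x$ wins at $V^*$ means $x\sqsupset_{V^*}c_{(\ell)}$ for every $\ell<j$.

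The heart of the argument is to show that voter $i$ must strictly decrease the score of each blocker $c_{(\ell)}$ with $\ell<j$. The key point is that a single voter can alter a candidate's score only by inserting it into or removing it from his top three, so if $c_{(\ell)}$ is \emph{not} demoted then $\scr_3(c_{(\ell)},V^*)\ge\scr_3(c_{(\ell)},V)$; and since $c_{(\ell)}\sqsupset_V x$ we have $\scr_3(c_{(\ell)},V)\ge\sigma$. I would then split into tie-break cases: if $\scr_3(c_{(\ell)},V)>\sigma$ then $c_{(\ell)}\sqsupset_{V^*}x$ outright; and if $\scr_3(c_{(\ell)},V)=\sigma$, then $c_{(\ell)}\sqsupset_V x$ with equal scores forces $c_{(\ell)}>x$ in the tie-breaking order, which again gives $c_{(\ell)}\sqsupset_{V^*}x$. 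Either way $x$ fails to beat $c_{(\ell)}$, contradicting that $x$ wins at $V^*$; hence each $c_{(\ell)}$, $\ell<j$, must be removed from voter $i$'s top three.

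Finally I would count slots. Voter $i$'s top three contains exactly three candidates, one of which is $x$ and must be kept, so he can remove at most two candidates. Since he is forced to remove all of $c_{(1)},\dots,c_{(j-1)}$, this yields $j-1\le 2$, i.e.\ $x\in\{c_{(2)},c_{(3)}\}$. As every Type~2 target belongs to this two-element set, the set of all such targets has cardinality at most two. I expect the only delicate step to be the tie-break bookkeeping in the ``must be demoted'' claim; the rest is a slot-counting argument. As a sanity check, running the same count with top-$2$ sets recovers Lemma~\ref{same_tops} (bound $1$), and with top-$k$ sets it would give the general bound $k-1$.
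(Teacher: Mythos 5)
Your proof is correct and uses essentially the same argument as the paper's: a Type~2 manipulator must keep his target approved, leaving at most two of his three approval slots available for demotions, while any undemoted candidate that beats the target at $V$ still beats it after the manipulation --- the paper runs this count against $w$ together with the top two elements of the target set $Q$ itself, whereas you run it against all candidates above the target in the global $\sqsupset_V$ order. The differences are minor and in your favour: you obtain the slightly sharper conclusion that every Type~2 target lies in $\{c_{(2)},c_{(3)}\}$, and you make explicit the tie-breaking monotonicity step (if $c\sqsupset_V x$ and $c$ is not demoted, then $c\sqsupset_{V^*} x$) that the paper's terse proof leaves implicit.
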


\begin{proof}
Suppose that the set of Type 2 manipulators at profile $V$ is not empty and let $Q$ be the set of candidates in favour of whom they can manipulate. Let $w$ be the winner at $V$ and let $q\in Q$ be such that $q\sqsupset_V q'$ for all $q'\in Q$. Let $p\in Q$ be the candidate such that $p\sqsupset_V q'$ for all $q'\in Q$ such that $q'\ne q$. 

We claim that $Q\subseteq \{q,p\}$. Indeed, for any third alternative $r\in C$ to win as a result of Type 2 manipulation, the respective voter must demote $w,q,p$, which is impossible  since $r$ by Lemma~\ref{mantypes} must be among the approved candidates. 
\end{proof}

\begin{theorem}\label{thm:ne-3app}
Under the Minimality Assumption any GS-game for 3-Approval has a Nash equilibrium in pure strategies.
\end{theorem}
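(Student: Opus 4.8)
The plan is to extend the argument of Theorem~\ref{thm:ne-2app}, organising everything around the score at which a candidate can be made to win. Fix $V$, let $w$ be the winner with maximal $3$-Approval score $s$, and let $Q$ be the set of candidates in favour of whom some player has a (minimal, by the Minimality Assumption) manipulation in the game $G$; if $Q=\emptyset$ the all-sincere profile is trivially a Nash equilibrium, so assume $Q\ne\emptyset$. Write $c\sqsupset_V c'$ when $c$ has a strictly larger $3$-Approval score at $V$ than $c'$, or an equal score with $c>c'$ in the tie-break. By Lemma~\ref{mantypes} every manipulation in $G$ is either a Type~1 promotion, which (being minimal) raises the score of a single candidate by exactly $1$ and leaves the winner's score untouched, or a Type~2 demotion of $w$, which lowers $\scr_3(w,\cdot)$ by $1$ and raises exactly the manipulator's best previously unapproved candidate by $1$. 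Consequently a Type~1 move can bring a candidate to score $s+1$ (only if that candidate already had score $s$) or to score $s$, while a Type~2 move can leave a winner of score at most $s$.

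Next I would let $p^*$ be the candidate that can be made to win in $G$ with the lexicographically largest pair (winning score, tie-break rank), realised by a voter $i^*$ playing a minimal manipulation $v^*_{i^*}\in A_{i^*}$, and set $V^*=(V_{-i^*},v^*_{i^*})$, whose winner is $p^*$. The claim is that $V^*$ is a Nash equilibrium. Only $i^*$ deviates from sincerity in $V^*$; reverting $i^*$ reinstates $w$, and $i^*$ strictly prefers $p^*$ to $w$, so $i^*$ has no profitable deviation. For every other player $v$, who is sincere in $V^*$, I must show that no manipulation in $A_v$ yields a winner that $v$ strictly prefers to $p^*$.

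The Type~1 deviations are the easy half: such a move lifts a single candidate $p'\in Q$ to score at most $s+1$, and by the lexicographic choice of $p^*$ this cannot produce a winner $\sqsupset_{V^*}$-above $p^*$ --- if $p^*$ already wins at $s+1$ then $p'$ can at best tie it and $p^*$ was taken tie-break-maximal among the promotable score-$s$ candidates, while if no score $s+1$ is achievable then every promotable $p'$ reaches only $s$ and again loses to $p^*$. The Type~2 deviations are the delicate half, and I expect them to be the main obstacle, because a minimal Type~2 move simultaneously drags $v$'s best unapproved candidate $c$ upward by one, and an inadvertently elected $c$ is exactly what destroyed the equilibrium in Example~\ref{ex:no_eq-2-Approval}. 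The device that tames this is that the move is a genuine GS-manipulation in favour of some target $t\succ_v w$: at $V$ this move makes $t$ win, so $c$ (which $v$ ranks below $w$) does not win at $V$, forcing $\scr_3(c,V)\le s-1$, and moreover $t\sqsupset_V c$. I would then trace the two changes from $V$ to the deviation profile --- the boost $i^*$ gives $p^*$ and the boost $v$ gives $c$ --- and show $p^*$ still wins: if $c$ reached score $s$ it would have $\scr_3(c,V)=s-1$, whence for $t$ to beat $c$ at $V$ the target $t$ would have to win at score $s$ with a higher tie-break rank than $c$, making $t$ lexicographically superior to $p^*$ and contradicting the maximality of $p^*$; the only remaining possibility leaves $c$ at a score below $s$, where $p^*$ dominates it outright.

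Finally I would note that Lemma~\ref{at_most__two} keeps this bookkeeping finite and explicit: the Type~2 targets form a set of size at most two, so when $p^*$ itself is achieved by a demotion the only competing Type~2 target is $\sqsupset_V$-dominated by $p^*$ and cannot be re-elevated by a further demotion of $w$ (a Type~2 move never raises the already-approved target). Collecting the cases shows that from $V^*$ no player can strictly improve, so $V^*$ is a pure Nash equilibrium. I expect the crux of the write-up to be the Type~2 argument above, where the minimality of the manipulations and the lexicographic optimality of $p^*$ must be combined to exclude precisely the inadvertent winners that the Soundness Assumption alone could not rule out once strategy sets grew beyond a single manipulation.
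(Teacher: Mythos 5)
Your proof breaks exactly where you predicted it would be delicate, but the failure is not the inadvertently promoted candidate $c$ --- it is the deviator's \emph{own target}. The flawed step is your final claim that when $p^*$ is achieved by a demotion, the competing Type~2 target ``cannot be re-elevated by a further demotion of $w$ (a Type~2 move never raises the already-approved target).'' Under $3$-Approval a minimal Type~2 manipulation may demote \emph{two} approved candidates, namely $w$ together with $p^*$ itself, and it then elects its target not by raising the target's score but by lowering $p^*$'s score; your lexicographic choice of $p^*$ (a comparison made at $V$) gives no protection against this, and there is no ``boost'' to $p^*$ to absorb the hit, since in this case $i^*$'s own move is a demotion of $w$ rather than a promotion of $p^*$. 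Concretely, take candidates $\{w,p_1,p_2,a,x,x',y'\}$ with tie-breaking order $w>p_1>p_2>a>x>x'>y'$ and the four-voter profile
\[
V=\bigl(\,p_1\,w\,a \,|\, x\,p_2\,x'y',\;\; p_2\,w\,p_1 \,|\, x'y'x\,a,\;\; w\,p_1\,p_2 \,|\, a\,x\,x'y',\;\; p_2\,a\,x \,|\, w\,p_1\,x'y'\,\bigr),
\]
so that $w,p_1,p_2$ each score $3$ and $w$ wins on the tie-break. The only GS-manipulators are voters $1$ and $2$, both demoters: voter $1$'s minimal manipulation is $v_1[w;x]$, electing $p_1$ with score $3$, and voter $2$'s is $v_2[wp_1;x'y']$, electing $p_2$ with score $3$ (demoting only $w$ does not help voter $2$, since $p_1$ would then beat $p_2$ on the tie-break). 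Your rule selects $p^*=p_1$ and the profile $V^*=(V_{-1},v_1[w;x])$. But $V^*$ is not a Nash equilibrium: if voter $2$ plays his manipulation there, the scores become $w:1$, $p_1:2$, $p_2:3$, so $p_2$ --- voter $2$'s top candidate --- wins, a strictly profitable deviation. (The equilibrium in this game is $(V_{-2},v_2[wp_1;x'y'])$, a profile your single-best-manipulation template never produces.)

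This is precisely why the paper's proof splits into two cases. When a Type~1 promoter exists for the $\sqsupset_V$-maximal promotable candidate $\bar p$, the promotion gives $\bar p$ a one-point buffer that absorbs one demotion, and minimality forces $\bar p\sqsupset_V p$ for any competing demoter target $p$; in that case a single-manipulator profile is an equilibrium and your argument essentially goes through. But in the all-demoters case --- classified via Lemma~\ref{at_most__two} into sets $V_1$ and $V_2$ of demoters targeting $p_1\sqsupset_V p_2$ --- the paper's equilibrium is either a profile in which a $V_2$ voter manipulates alone (winner $p_2$, as in the example above), or a profile in which \emph{several} $V_1$ voters manipulate simultaneously, chosen so that any deviation by a $V_2$ voter hands the election to one of that deviator's unapproved candidates $x$ or $y$, both of which he ranks below $p_1$. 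Both constructions lie outside the class of profiles your proof considers, so the all-demoters case is a genuine missing piece of the argument, not a detail of the write-up.
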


\begin{proof}
We can assume that the set of GS-manipulators $N(V, 3\text{-App})$ is non-empty. Let $w$ be a 3-Approval winner at the sincere profile $V$ with the score $\scr_3(w,V)=t$. By Lemma~\ref{mantypes}, we can partition the set of GS-manipulators into a set of \emph{promoters}, i.e., the set of $j \in N(V, 3\text{-App})$ such that $w\not\in \tp_3(v_j)$, and a set of \emph{demoters}, i.e., such voters $v_\ell$ for whom $w \in \tp_3(v_\ell)$.

Assume first that the set of promoters is non-empty. Let $Q$ be the set of candidates in favour of whom promoters can manipulate, and let $\bar{p}\in Q$ such that $\bar{p}\sqsupset_V q$ for all $q\ne \bar{p}$.
Let therefore $i$ be a promoter in favour of $\bar{p}$, and let $v_i^*$ be his GS-manipulation strategy. 
We now show that $V^*=(V_{-i},v_i^*)$ is a NE. 
Observe that by minimality of $v_i^*$ and by definition of $\bar{p}$ no other promoter can change the outcome of $V^*$. We can therefore focus on the set of demoters.  Let $j$ be a demoter and let $v_j^*$ be its manipulation strategy in favour of candidate $p$. 
By minimality assumption, $v_j^*$ either removes only $w$ from $\tp_3(v_j)$, or removes $w$ together with a second candidate {\color{black} (and the third candidate in $\tp_3(v_j)$ by Lemma~\ref{mantypes} must be then $p$)}. While the first case would not be a profitable deviation at $V^*$ since the result of the election does not change, we need more attention in the second case {\color{black} since $\bar{p}$ could be that second demoted candidate by voter $j$.

The fact that voter $j$ had to demote $\bar{p}$, due to the minimality assumption, means that $\bar{p}\sqsupset_V p$. In this case we will also have  $\bar{p}\sqsupset_{V^{**}} p$, where $V^{**}=(V^*_{-j},v^*_j)$.} 
%
Hence $\bar p$ wins against $p$ in $V^{**}$, and $v_j^*$ is not a profitable deviation for $j$ at $V^*$.

We can now assume that the set of promoters is empty, and that therefore all GS-manipulators in $N(V, 3\text{-App})$ are demoters. 
{\color{black}
By Lemma~\ref{at_most__two} there are at most two candidates in favour of whom manipulation is possible. Let us denote them $p_1$ and $p_2$ with $p_1\sqsupset_V p_2$. 
}

The set of demoters can be partitioned into a set $V_1$ of GS-manipulators for $p_1$, whose minimal strategy is to lower the current winner $w$ only, and a set $V_2$ of GS-manipulators for $p_2$, whose minimal strategy is to lower both $w$ and $p_1$. Note that voters in $V_2$ has $p_2$ as their top candidate.

If $V_2=\emptyset$, all manipulators manipulate in favour of $p_1$ and Nash equilibrium obviously exist. 
Consider then the case in which $V_1$ and $V_2$ are both non-empty.  Note that once a voter from $V_2$ manipulates, $p_1$ can no longer win no matter how other voters vote. 

For all pairs of candidates $x,y$ different than $w, p_1,$ or $p_2$, let 
$$
V_2^{x,y}=\{j\in V_2\mid v_j[w,p_1; x,y]\in A_j \}.
$$
A voter $v\in V_1$ with manipulation $v^*=v[w,x]$ ranks $p_2$ lower than $x$ and this manipulation may serve as a countermanipulation strategy to manipulations of voters in $V_2^{x,y}$, making, under certain circumstances $x$ to win instead of $p_2$, so we need to design a strategy profile in which such situations do not occur. 
Let therefore 
\[
V_1^x=\{j\in V_1\mid v_j[w;x]\in A_j \text{ and } 3\text{-app}(V_{-\{i,j\}},v_i^*, v_j^*)=x \text{ for some  $i\in V_2^{x,y}$}\},
\]
 i.e., $V_1^x$ is the set of voters who have a countermanipulation move in favour of $x$ when $x$'s score is being raised by a manipulator in favour of $p_2$ by some voter in $V_2$.
 
If there exists $j\in V_2$ and $x,y$ such that $j\in V_2^{x,y}$ but both $V_1^x$ and $V_1^y$ are empty, then it is easy to see that $(V_{-j}, v_j[w,p_1;x,y])$ is a Nash equilibrium: voters in $V_1$ cannot change the outcome, and voters in $V_2$ are satisfied with having $p_2$ the winner.
Suppose then that this is not the case, i.e., for each pair of candidates $x,y$, either $V_2^{x,y}$ is empty, or one of $V_1^{x}$ and $V_1^y$ are not empty. 
Pick one voter from each non-empty $V_1^x$ -- they are all distinct since each voter belongs to at most one $V_1^x$, having a single manipulation strategy. Without loss of generality let them be $J=\{1,\dots,k\}$, and let $V^*=(V_{-J}, v_1^*,\dots, v_k^*)$ be the profile in which all GS-manipulators in $J$ play their manipulation strategies $v_1^*,\dots, v_k^*$. 
Since $p_1$ is the winner in $V^*$, all voters in $V_1$ do not have incentives to deviate.
Voters in $V_2$ also do not have incentives to deviate. For if any $j\in V_2^{x,y}$ manipulate in $V^*$ the result would change in favour of either $x$ or $y$, which by construction are less preferred by $j$ than $p_1$.
This concludes the proof.
\end{proof}

\subsection{$4$-Approval}

In contrast, for $4$-Approval the existence of Nash equilibria is no longer guaranteed, even if GS-manipulations are restricted to minimal ones.

\begin{theorem}
There exists a game $G = (V, 4\text{-App}, (A_i)_{i\in N(V, 4\text{-App})})$,
where for each player $i\in N(V, 4\text{-App})$ the set $A_i$ consists of $i$'s truthful vote and $i$'s minimal GS-manipulation, such that $G$ has no Nash equilibrium.
\end{theorem}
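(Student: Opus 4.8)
The plan is to exhibit an explicit counterexample, in the spirit of Example~\ref{ex:no_eq-2-Approval} but under $4$-Approval and with the stronger restriction that every manipulator is given exactly two actions: his sincere vote and a single \emph{minimal} GS-manipulation. The target is a profile $V$ together with a fixed tie-breaking order whose induced best-response graph is a cycle with no sink, so that in every reachable GS-profile at least one player has a profitable unilateral deviation.

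First I would fix the candidate set and the sincere profile so that a candidate $w$ wins, and so that there are several Type~2 demoters (in the sense of Lemma~\ref{mantypes}) targeting \emph{distinct} candidates. The crucial point is that the argument salvaging $3$-Approval in Theorem~\ref{thm:ne-3app} rests on Lemma~\ref{at_most__two}, which caps at $2$ the number of candidates in favour of whom Type~2 manipulators can act. Under $4$-Approval a minimal Type~2 manipulation may remove $w$ from the top four positions together with two further candidates, so the analogue of that bound rises to $3$, and the equilibrium-selection scheme of the $3$-Approval proof (pick the $\sqsupset_V$-maximal target and check that no demoter can overturn it) no longer closes. I would exploit exactly this extra degree of freedom: with three mutually interfering demoters whose minimal manipulations each install their own target when played alone, but install a \emph{third} candidate when paired with one specific other demoter, the three targets can be made to chase one another around a cycle.

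Concretely I would arrange the scores so that $w$ and the three targets all lie within one point of each other, tabulate the winner at each of the at most $2^3=8$ reachable GS-profiles exactly as in Table~\ref{StrategyProfiles}, list for each profile a player with a profitable deviation, and read off that these deviations form a single directed cycle covering all profiles. Since no vertex then has full in-degree, no GS-profile is a Nash equilibrium.

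The main obstacle is the simultaneous satisfaction of three constraints: (i) each chosen insincere vote must genuinely be a GS-manipulation, i.e.\ every manipulator must strictly prefer the winner he can secure alone to $w$; (ii) each such vote must be the \emph{minimal} manipulation in favour of its target, so that the Minimality Assumption is met and the strategy set is forced to be precisely $\{$sincere, that one vote$\}$; and (iii) the interference pattern among these minimal manipulations must yield a sink-free best-response cycle rather than collapsing to an equilibrium. Verifying (iii) while preserving (i) and (ii) is delicate, because minimality rigidly fixes which candidates each deviation promotes and demotes, leaving little slack in the scores; the bulk of the work is choosing the tie-breaking order and the exact placement of the candidates below the top four so that the eight outcomes fall into the required cyclic pattern. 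Once a candidate profile is fixed, the verification is the routine evaluation of the $4$-Approval scores at each GS-profile.
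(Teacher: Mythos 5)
Your plan identifies exactly the right mechanism: under $4$-Approval a minimal Type~2 manipulation can demote $w$ together with \emph{two} further candidates, so the bound of Lemma~\ref{at_most__two} rises from two possible Type~2 targets to three, and the equilibrium-selection argument of Theorem~\ref{thm:ne-3app} no longer closes. This is precisely how the paper's own counterexample works: its three manipulators are nested demoters with minimal manipulations $v_1[w;c]$, $v_2[wd_1;cd]$ and $v_3[wd_2d_1;dxe]$, and the verification is a table of the $2^3=8$ strategy profiles, each admitting a profitable deviation. So the intended route is the same as the paper's.

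However, there is a genuine gap: the proposal never produces the witness. The theorem is an existence claim, and its entire mathematical content \emph{is} the explicit profile; once a profile is on the table, tabulating eight outcomes is routine. You yourself isolate the hard part --- simultaneously meeting (i) each insincere vote is a genuine GS-manipulation, (ii) each is the minimal one in favour of its target, and (iii) the interference pattern is sink-free --- and then defer it (``the bulk of the work is choosing the tie-breaking order and the exact placement of the candidates''). That deferred work is the proof. It cannot be taken on faith that the three constraints are simultaneously satisfiable: for $3$-Approval, Theorem~\ref{thm:ne-3app} shows that under the Minimality Assumption every attempted construction of this kind necessarily collapses to an equilibrium, so the feasibility for $k=4$ is exactly what must be demonstrated. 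The paper discharges this with a concrete $6$-voter, $8$-candidate profile (Table~\ref{No_Nash_4-App}): three non-manipulators $u_1,u_2,u_3$ pin the scores ($w,d_1,d_2,d_3$ at four points, $c,d$ at three, $e,x$ at one), tie-breaking $w\succ d_1\succ d_2\succ d_3\succ c\succ d\succ e\succ x$, and auxiliary candidates $c$ and $d$ absorb the interference so that the eight outcomes (Table~\ref{table:nonash}) contain a six-cycle of best responses with the remaining two profiles feeding into it. One further small correction: you do not need, and the paper does not obtain, a single deviation cycle covering \emph{all} profiles; it suffices that every profile has at least one outgoing improvement edge, i.e.\ that the best-response graph has no sink. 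Until you exhibit and verify such a profile, the statement remains unproven.
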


\begin{proof}
Let $\{u_1, u_2, u_3, v_1, v_2, v_3\}$ be a set of voters using 4-Approval to choose one among 8 candidates $\{w,d_1,d_2,d_3,d,e,c,x\}$. 
Let the tie-breaking rule be $w\succ d_1 \succ d_2 \succ d_3 \succ c \succ d \succ e \succ x$. 
Let $V$ be the profile in Table~\ref{No_Nash_4-App}, where the top four approved candidates are those above the line in each individual preference:
%
\begin{table}[H]
  \centering
    \begin{tabular}{c|c|c||c|c|c}
    $u_1$& $u_2$ & $u_3$   & $v_1$ & $v_2$ & $v_3$ \\
    \hline
    $w$&$c$&$d$&$d_1$&$d_2$&$d_3$\\
    $d_1$&$d_2$&$d_3$&$w$&$w$&$w$\\
    $e$&$d$&$c$&$d$&$d_1$&$d_2$\\
    $c$&$d_3$&$d_2$&$d_3$&$x$&$d_1$\\
    \hline
    $d$&$w$&$w$&$c$&$c$&$d$\\
    $x$&$e$&$d_1$&$d_2$&$d$&$x$\\
    $d_2$&$x$&$e$&$e$&$d_3$&$e$\\
    $d_3$&$d_1$&$x$&$x$&$e$&$c$\\
    \bottomrule
    \end{tabular}%
   \caption{A Profile for the 4-Approval GS-game with no Nash Equilibrium}
  \label{No_Nash_4-App}%
\end{table}
The scores of alternatives are as follows: all of $w, d_1, d_2, d_3$ get 4 points, $c$ and $d$ get 3 points, $e$ and $x$ gets 1 point. The winner at $V$ is therefore $w$.
The first three candidates cannot manipulate: $u_1$ ranks the winner $w$ on top; $u_2$ and $u_3$ rank $w$ just below the approval line and hence there is no candidate they prefer to $w$ that can be promoted.
Thus the set of GS-manipulators $N(V,4\text{-App})=\{v_1,v_2,v_3\}$. We restrict the set of strategies of voter $j$ to the sincere strategy $s_j$ and the minimal manipulation strategy $i_j$, hence $A_j=\{s_j,i_j\}$, where
\begin{itemize}
\item $i_1=v_1[w;c]$ making $d_1$ winner;
\item $i_2=v_2[wd_1;cd]$ making $d_2$ winner;
\item $i_3=v_3[wd_2d_1;dxe]$ making $d_3$ winner.
\end{itemize}

There are 8 strategy profiles in this game, and in Table~\ref{table:nonash} we indicate for each strategy profile which of the candidates is the winner and which of the voters have an incentive to change the strategy. 

\begin{table} 
\begin{center}
\begin{tabular}{cccc}
Profile & Winner & Deviation & New Winner\\
\hline
$(s_1,s_2,s_3)$ & $w$  & $1$ switches to $i_1$ & $d_1$\\
$(i_1,s_2,s_3)$ & $d_1$  & $3$ switches to $i_3$ & $d_3$\\
$(s_1,i_2,s_3)$ & $d_2$  & $1$ switches to $i_1$ & $c$\\
$(s_1,s_2,i_3)$ & $d_3$  & $2$ switches to $i_2$ & $d$\\
$(i_1,s_2,i_3)$ & $d_3$  & $2$ switches to $i_2$ & $c$\\
$(s_1,i_2,i_3)$ & $d$  & $3$ switches to $s_3$ & $d_2$\\
$(i_1,i_2,s_3)$ & $c$  & $2$ switches to $s_2$ & $d_1$\\
$(i_1,i_2,i_3)$ & $c$  & $1$ switches to $s_1$ & $d$
\end{tabular}
\caption{Deviations from strategy profiles.}
\label{table:nonash}
\end{center}
\end{table}

At every strategy profile there is at least one player that prefers the winner of a different profile to the current one. Hence, there is no Nash equilibrium. The response dynamics is illustrated on the following diagram, drawn in line with the representation of 2-by-2 games we presented in Section~\ref{sec:2by2}. We omit indices in Figure~\ref{fig:cube} as it does not create any confusion. The cycle
$(s,i,s)\rightarrow (i,i,s)\rightarrow (i,s,s)\rightarrow (i,s,i)  \rightarrow (i,i,i)\rightarrow (s,i,i)\rightarrow (s,i,s)$
marked in the figure with a dashed line shows that there is no NE in the game.
\begin{figure}
\begin{center}
\resizebox{8cm}{!}{\includegraphics{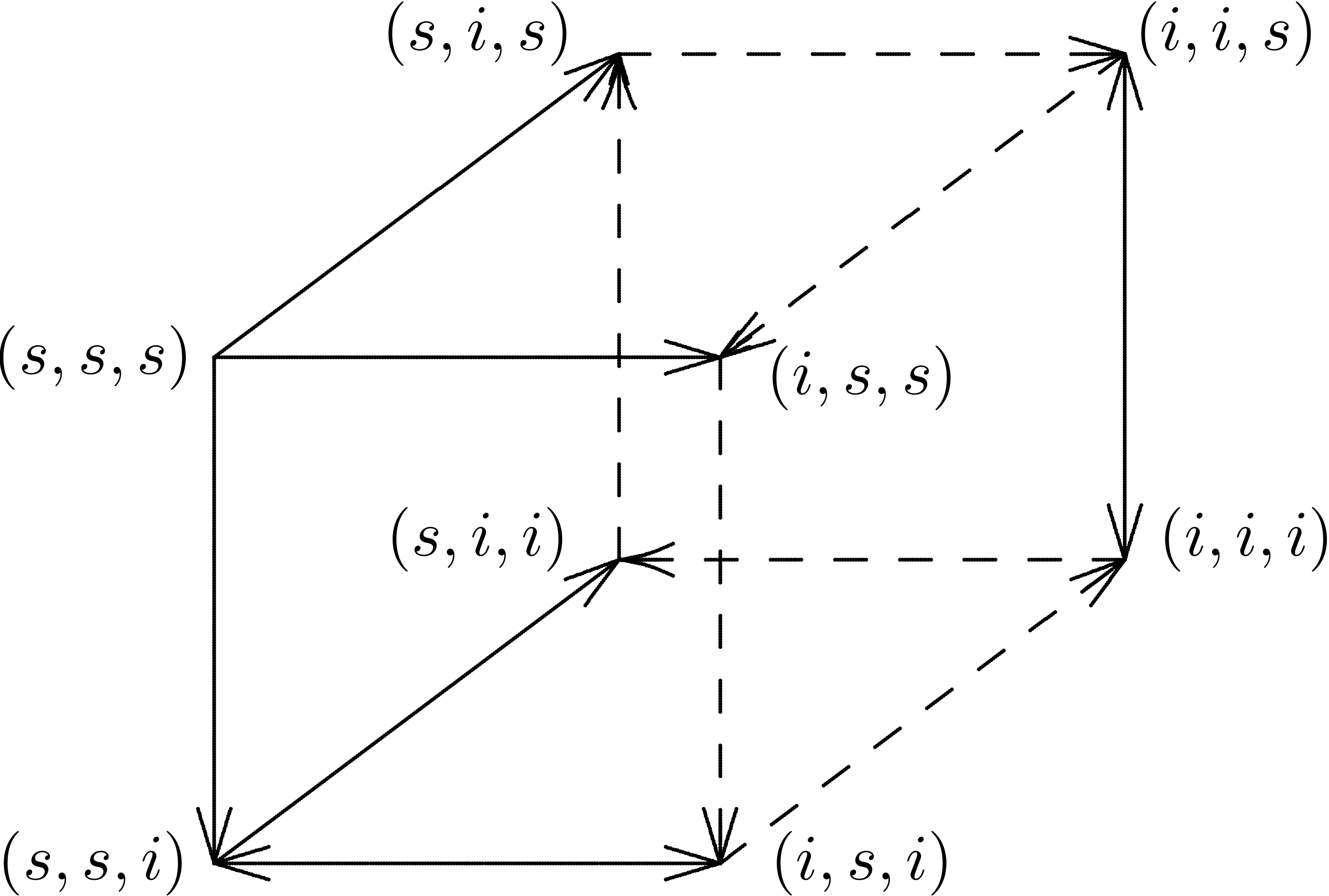}}
\end{center}
\caption{A 4-Approval GS-game with no NE.}\label{fig:cube}
\end{figure}
\end{proof}

\section{Discussion of the Results}\label{sec:discussion}

In this paper we suggested a new framework for studying voting manipulation games, and in this section we clarify some aspects of the proposed framework and provide additional justification for our hypothesis. 

{\bf Boundedly rational voters.} Unlike \cite{mye-web:j:voting} and their followers, in our approach voters are boundedly rational and cannot see beyond manipulations in Gibbard-Satterthwaite sense, and countermanipulations to those. To contemplate a countermanipulation voters must have a higher degree of rationality (at least level-2 in the cognitive hierarchy model by \citet{camerer2004cognitive}) so most of the time we assume that the game is played by Gibbard-Satterthwaite manipulators alone. In particular, they vote sincerely if they cannot change the result. This hypothesis is more in line with what real voters do.  It is extremely hard to estimate how many 
strategic voters are present in a given election, but the percentage of those who actually manipulated is easier to 
estimate. For instance, \cite{KW2013} estimate the number of such voters, called misaligned,
 in Japanese elections between 2,5\% and 5,5\%. 
Moreover, \cite{BBS2013} show that preference misrepresentation is related to cognitive skills, and \cite{CKMS2014} 
demonstrate that decision-making ability in laboratory experiments correlates strongly with socio-economic status 
and wealth. Therefore, it is reasonable to assume that only a small fraction of voters in an election would act 
strategically when given an opportunity to do so.

{\bf Complete information.} The Gibbard-Satterthwaite theorem is proved under the assumption that voters know sincere preferences of other voters.  It is thus natural to make the same assumption when we investigate the interaction of Gibbard-Satterthwaite manipulators. 

{\bf Randomised strategies.} Firstly, we would like to emphasise our intention to stay within the ordinal model of preferences. Thus we do not assume that voters can identify the utilities that they will enjoy when each candidate wins.  If we assumed that voters were allowed to include randomised strategies in their strategy sets, then outcomes become lotteries and  voters must be able to compare them. However, within ordinal model of preferences our voter with preferences $a>b>c$ is unable, for example, to compare the lottery that gives 50\% chance of  $a$ and 50\% chance of  $c$ with the sure thing lottery that gives him 100\% chance of $b$. 

{\bf Tie-breaking.} Our results clearly depend on the way ties are broken, and the alphabetic tie-breaking is the most popular method for doing this. It violates neutrality but preserves anonymity which usually considered as a more important property. Randomised tie-breaking is not applicable in our framework. It would be interesting, however, to see if our main results survive under other methods of deterministic tie-breaking.

{\bf Other Models of Bounded Rationality.} The voters in our games are boundedly rational. In particular, they believe that a voter who cannot manipulate or countermanipulate must stay sincere; they do not think they can have any other more sophisticated intentions. 
This is similar to assumptions of the paper \cite{EGRS} where they model a view of the game that a level-2 voter from the cognitive hierarchy model might have. Our voters are even less sophisticated than level-2 voters since the latter can think of manipulating or countermanipulating.


\section{Conclusions}\label{sec:conclusions}

To model the games played by voters in elections more realistically we must take in consideration those voters are only boundedly rationality. 
We suggest here such a model. Our voters assume that other players may be strategic and can either manipulate or countermanipulate and we study the games that they have to play as a result. 
Unfortunately, there are just a few known characteristics of games that can be meaningfully investigated. The most important of them is whether a particular game has Nash equilibria in pure strategies or not. We, thus, analyse voting manipulation games from this perspective. 

We noted that the existence of countermanipulators prevents any chances of having Nash equilibria (so these games have to be investigated from another perspective). Hence we concentrate on games played by Gibbard-Satterthwaite manipulators only (GS-games).

We have initiated the study of such games. We have shown that 
for Plurality these games exhibit a fairly simple structure; however, already for  $k$-Approval
with $k>1$ GS-games are quite complicated, and it may therefore be difficult
for players to decide on their actions. 
%
Many questions concerning GS-games remain open. The most immediate of them is to fully understand 
the role of minimality assumptions in the proof for 3-Approval. Further afield, it would be interesting
to extend our study to other voting rules, most notably Borda, and to identify reasonable restrictions 
on the manipulators' strategy spaces that lead to existence  of Nash equilibria or 
 make it easy to compute manipulations that weakly dominate truthtelling.

\bibliographystyle{abbrvnat}
\bibliography{voting}
\end{document}